\newcommand{\name}{Conflux}
\newcommand{\phvname}[1]{{\fontfamily{phv}\selectfont #1}}
\newtheorem{lemma}{Lemma}
\newtheorem{theorem}[lemma]{Theorem}
\newenvironment{proof}{\vspace{-0.05in}\noindent{\bf Proof.}}%
{\hspace*{\fill}$\Box$\par}
{\hspace*{\fill}$\Box$\par\vspace{4mm}}
{\hspace*{\fill}$\Box$\par}
\newcommand{\lh}{\lambda_{\text{h}}} 
\newcommand{\la}{\lambda_{\text{a}}} 
\newcommand\blfootnote[1]{%
  \begingroup
  \renewcommand\thefootnote{}\footnote{#1}%
  \addtocounter{footnote}{-1}%
  \endgroup
}
\begin{document}
\title{\bf Scaling Nakamoto Consensus to Thousands of Transactions per Second}
\author{Chenxing Li*, Peilun Li*, 
Dong Zhou*$^\dagger$, 
Wei Xu, Fan Long$^\dagger$, and Andrew Chi-Chih Yao\\
Institute for Interdisciplinary Information Sciences, Tsinghua University\\
Alt-chain Technologies$^\dagger$
}
\date{}
\maketitle
\thispagestyle{empty}

\begin{abstract}
This paper presents {\name}, a fast, scalable and decentralized blockchain
system that optimistically process concurrent blocks without discarding any as
forks. The {\name} consensus protocol represents relationships between blocks
as a direct acyclic graph and achieves consensus on a total order of the
blocks. {\name} then, from the block order, deterministically derives a
transaction total order as the blockchain ledger. We evaluated {\name} on
Amazon EC2 clusters with up to 20k full nodes. {\name} achieves a transaction
throughput of 5.76GB/h while confirming transactions in 4.5-7.4 minutes. The
throughput is equivalent to 6400 transactions per second for typical Bitcoin
transactions. Our results also indicate that when running {\name}, the
consensus protocol is no longer the throughput bottleneck. The bottleneck is
instead at the processing capability of individual nodes. 

\blfootnote{*Equal contributions and ranked alphabetically.}
\blfootnote{$^\dagger$Starting from September 2018, Fan Long is an assistant professor at University of
    Toronto and Dong Zhou is a PhD student at Carnegie Mellon University. }
\end{abstract}

\section{Introduction}
\label{sec:introduction}

Following the success of the cryptocurrencies~\cite{bitcoin,
  Ethereum}, blockchain has recently evolved into a technology
platform that powers secure, decentralized, and consistent transaction
ledgers at Internet-scale. The ledger becomes a powerful abstraction
and fuels innovations on real-world applications in financial systems,
supply chains, and health cares~\cite{IBMSupplyChain,
  DeloitteFinancial, DeloitteHealthCare}, shifting the landscapes of
the industries that worth hundreds of billions of dollars.

Blockchain platforms like Bitcoin~\cite{bitcoin} use \emph{Nakamoto consensus}
as their consensus protocols. This protocol usually organizes
transactions into an ordered list of blocks (i.e., a blockchain), each of which
contains multiple transactions and a link to its predecessor. To defend against
Sybil attacks, everyone solves \emph{proof-of-work} problems
(e.g., finding partial hash collisions) to compete for the right of generating
the next block. To prevent an attacker from reverting previous transactions,
everyone agrees on the longest chain of blocks as the correct transaction
history. Each newly generated block will be appended at the end of the longest
chain to make the chain even longer and therefore harder to revert.

However, the performance bottleneck remains one of the most critical challenges
of current blockchains. In the standard Nakamoto consensus, the performance
is bottlenecked by the facts 1) that only one participant can win the
competition and contribute to the blockchain, i.e., concurrent blocks
are discarded as \emph{forks}, and 2) that the slowness is essential to
defend against adversaries~\cite{GHOST, SPECTRE}. For example, Bitcoin
generates one 1MB block every 10 minutes and can therefore only process 7
transactions per second. Furthermore, to obtain high confidence that a
transaction is irreversible in Bitcoin, users typically have to wait for
tens of blocks building on top of the enclosing block of the transaction.
This causes hours of waiting before confirming a transaction. The insufficient
throughput and long confirmation delay severely limit the adoptions of
blockchain techniques, causing poor user experience, congested network, and
skyrocketing transaction fees~\cite{Ethereum-CryptoKitties, Bitcoin-Fee}.

Previous research focuses on reducing the participations of the consensus to
improve the performance without compromising the security of the blockchains.
For example, Bitcoin-NG~\cite{BitcoinNG} periodically elects a leader and
allows the leader to dictate the transaction total order for a period of time.
It improves the throughput but not the confirmation time of transactions.
Several proposals elect a small set of participants as a committee to run
Byzantine fault tolerance (BFT) to determine the transaction
order~\cite{ByzCoin, HoneyBadgerBFT, stellar, Algorand}. This solution may
create undesirable hierarchies among protocol participants and compromise the
decentralization of the blockchains.

\subsection{{\name}}

We present {\name}, a fast, scalable, and decentralized blockchain system that
can process thousands of transactions per second while confirming each
transaction in minutes. The core of {\name} is its consensus protocol that
allows multiple participants to contribute to the {\name} blockchain
concurrently while still being provably safe. The protocol enables
significantly faster block generation and, in turn, enables higher throughputs
and faster confirmations. In fact, the {\name} throughput is no longer limited
by the consensus protocol, but by the processing capability of each individual
node.


One observation that drives the design of {\name} is that standard Nakamoto
consensus protocols preemptively define a restrictive total order of
transactions when generating each block. This strategy introduces many false
dependencies that lead to unnecessary forks. The key to our approach is to
\emph{defer the transaction total ordering and optimistically process
    concurrent transactions and blocks}. Given that the transactions rarely
conflict in blockchains (particularly in cryptocurrencies), {\name} first
optimistically assumes that transactions in concurrent blocks would not
conflict with each other by default and therefore considers only explicit
happens-before relationships specified by the block generators. This enables
{\name} to operate with less constraints and efficiently achieve consensus on
one block total order among many possibilities. Each participant in {\name}
then deterministically derives the transaction total order from the agreed
block total order, discarding conflicting and/or duplicated transactions. This
lazy reconciliation provides the same external interface to the users.

To safely incorporate contributions from concurrent blocks, the {\name}
consensus protocol maintains two kinds of relationships between blocks.  When a
participant node generates a new block in {\name}, the node identifies a parent
(predecessor) block for the new block and creates a \emph{parent edge} between
these two blocks like Bitcoin. These parent edges enable {\name} to achieve
consistent irreversible consensus on its ledger.  The node also identifies all
blocks that have no incoming edge and creates \emph{reference edges} from the
new block to those blocks. Such reference edges represent that those blocks are
generated before the new block.  They enable {\name} to systematically extend
the achieved consensus to incorporate concurrent blocks.  As a result, the
edges between blocks now form a sophisticated direct acyclic graph (DAG) rather
than a chain with potential forks~\cite{SPECTRE, PHANTOM}.

One challenge {\name} faces is how to agree on an irreversible block total
order of the DAG that include concurrent blocks. {\name} addresses this
challenge with its novel ordering algorithm. Given a \emph{pivot chain} that
starts from the genesis block and that contains only parent edges, the ordering
algorithm deterministically partitions all blocks in the DAG into \emph{epochs}
using the pivot chain, topological sorts blocks in each epoch, and concatenates
the sorting results of all epochs into the final total order following the
happens-before order between epochs. The algorithm guarantees that if the pivot
chain stabilizes (i.e., blocks on the chain are irreversible except the last
few blocks), then the produced block total order will stabilize (i.e., blocks
in the total order are irreversible except blocks in the last few epochs). With
its ordering algorithm, {\name} reduces the problem of achieving consensus on a
block total order of the DAG to the problem of achieving consensus on a pivot
chain. {\name} therefore uses a modified chain-based Nakamoto
consensus~\cite{GHOST} to solve the chain consensus problem.

\noindent \textbf{Assumptions and Guarantees:}
{\name} operates under similar assumptions as Bitcoin and many others. 
Honest nodes are reasonably synchronous respecting a network
diameter $d$ (i.e., messages sent by honest nodes will be delivered with a
maximum delay of $d$). All honest nodes together control more block generation
power than attackers. Under these assumptions, the {\name} consensus
protocol guarantees that the agreed block total order is irreversible with very
high probability.

Note that we focus on the consensus protocol design and implementation and the
incentive mechanism is outside the scope of this paper. Therefore in this
paper, we refer ``honest nodes'' as nodes that run bug-free programs that
faithfully implement {\name}. We leave designing a compatible incentive
mechanism to economically encourage honest behaviors as future work.

\subsection{Experimental Results}

We implemented a prototype of {\name} and evaluated {\name} by
deploying up to 20k {\name} full nodes on 800 Amazon EC2 virtual machines.
We also compared {\name} with two standard chain-based Nakamoto consensus
approaches, Bitcoin~\cite{bitcoin} and GHOST~\cite{GHOST}. 

Our experimental results show that under the 20Mbps bandwidth limit for each
full node (i.e., the same experimental environment as
Algorand~\cite{Algorand}), {\name} can process one 4MB block per 5 seconds and
therefore achieves the transaction throughput of 2.88GB/h; the {\name}
throughput is 11.62x of the throughput of Bitcoin and GHOST, and is 3.84x of the
throughput of Algorand~\cite{Algorand}. Under the 40Mbps bandwidth limit,
{\name} can process one 4MB block per 2.5 seconds and therefore achieves an
even higher transaction throughput of 5.76GB/h. 
The {\name} throughput is equivalent to 6400 transactions per second for
typical Bitcoin transactions. 
In fact, our results indicate
that when running {\name}, the consensus protocol is no longer the bottleneck
of the throughput, but the processing capability of
individual nodes (e.g., the hardware bandwidth limit).

Our experiments also show that by working with faster generation rates, {\name}
allows blocks building on top of each other more quickly and therefore enables
fast confirmations. {\name} confirms transactions in 4.5-7.4 minutes under the
4MB/2.5s and 40Mbps setting and 7.6-13.8 minutes under the 4MB/5s and 20Mbps
setting.


\subsection{Contribution}

This paper makes the following contributions:

\begin{itemize}
\item \textbf{Consensus Protocol:} We present a fast and scalable DAG-based
    Nakamoto consensus protocol and its prototype implementation, {\name}, to
    optimistically process concurrent blocks while lazily reconciling the
    transaction total order from an agreed block total order. {\name} novelly
    maintains two different kinds of relationships between generated blocks to
    safely incorporate contributions from concurrent blocks into its ledger.




\item \textbf{{\name} Implementation:} We present a prototype implementation of
    {\name} based on the Bitcoin core codebase~\cite{Bitcoin-core}. To the best
    of our knowledge, {\name} is the first blockchain system that uses a
    DAG-based Nakamoto consensus protocol and that can process thousands of
    transactions per second.


\item \textbf{Experimental Results:} We present a systematic large-scale
    evaluation of {\name}. Our results show that, when running with up to 20k
    full nodes, {\name} can achieve the transaction throughputs of 2.88GB/h and
    5.76GB/h and confirm transactions with high confidence in 4.5-13.8 minutes.
\end{itemize}

The rest of this paper is organized as follows. Section~\ref{sec:overview}
presents an overview of {\name} and an illustrative running example for
{\name}. Section~\ref{sec:algorithm} presents the {\name} consensus protocol.
Section~\ref{sec:implementation} discusses our prototype implementation of
{\name}. Section~\ref{sec:results} presents our evaluation of {\name}. We
discuss related work in Section~\ref{sec:related_work} and conclude in
Section~\ref{sec:conclusion}.


\section{Overview and Example}
\label{sec:overview}

This section presents an overview of {\name} and a running example to
illustrate the {\name} consensus protocol.

\subsection{{\name} Architecture}

\begin{figure}
\centering
\includegraphics[width=0.95\columnwidth]{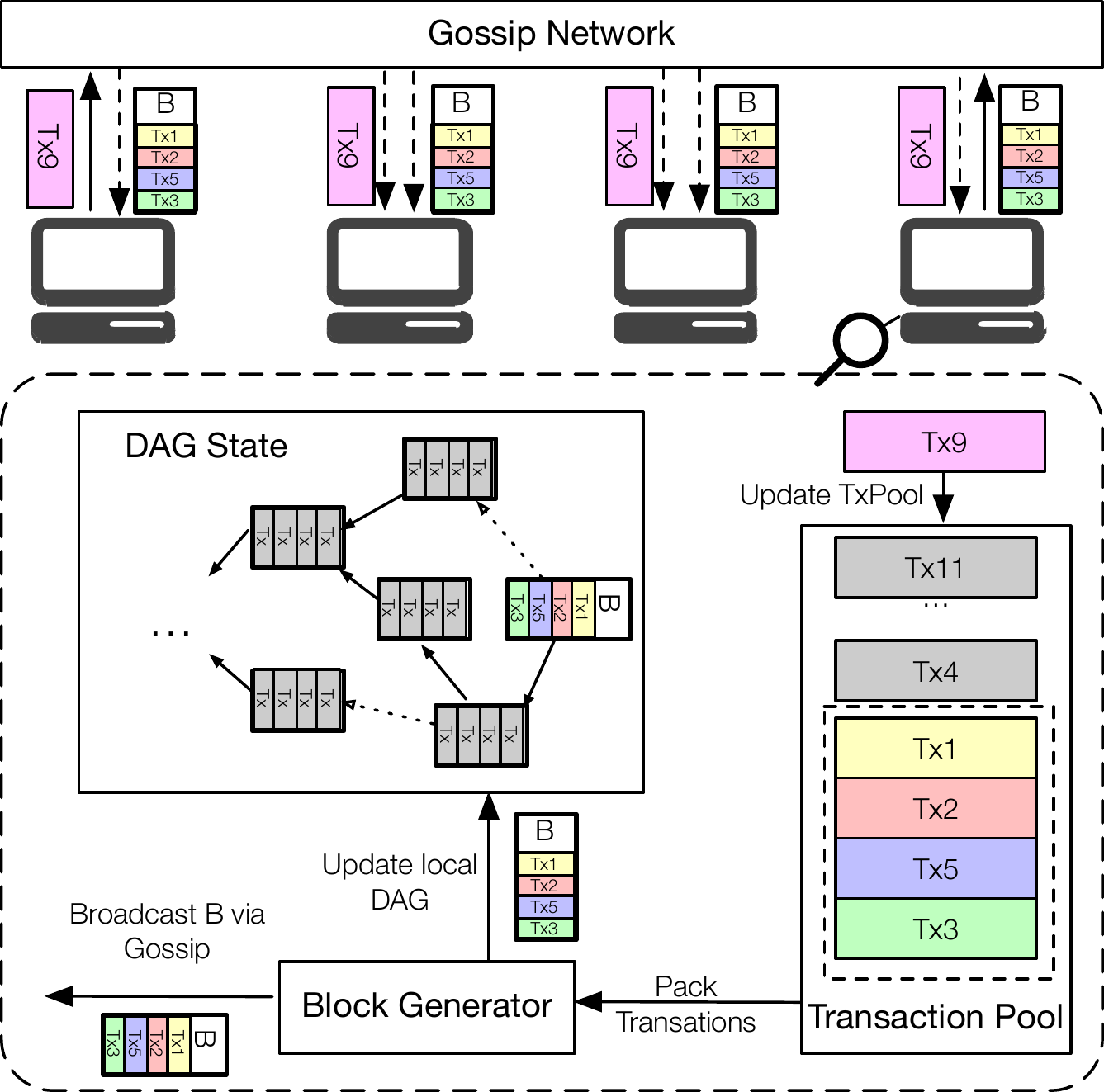}
\caption{Architecture of {\name}.}
\label{fig:architecture}
\end{figure}

Figure~\ref{fig:architecture} presents the architecture diagram of {\name}.
Similar to Bitcoin, a transaction in {\name} is a payment message
signed by the payer to transfer coins from the payer to the
payee, where the payer and the payee are identified by their public
keys. There can be special coinbase transactions to mint new coins.
Each transaction also has a unique id generated by cryptographic
digest functions to ensure its integrity. A block consists of a list of
transactions (e.g. there are 4 transactions in block \phvname{B} of
Figure~\ref{fig:architecture}) and reference links to
previous blocks (e.g., in the DAG state in Figure~\ref{fig:architecture},
\phvname{B} links to two previous blocks). Each block also has a unique id
generated by digest functions to ensure data integrity. At the very beginning,
{\name} starts with a predefined \emph{genesis block} to determine the initial
state of the blockchain.

A key difference between {\name} and traditional blockchains is that
the blocks and the edges form a DAG instead of a linear chain. As the
blocks travel across the netowrk, each node might observe a slighlty
different DAG due to network delays. The goal of {\name} is to
maintain the local DAG of each individual node so that all nodes 
in {\name} can eventually agree on a total order of blocks (and transactions).

Note that having total orders in {\name} enables the supports of smart
contracts, which execute Turing complete 
programs to update the state associated with accounts. For the purpose of
illustration, in this paper we assume a 
balance model with simple payment transactions. One transaction
conflicts with previous transactions if after previous transactions,
the payer does not have enough balance to execute the transaction.

\noindent \textbf{Gossip Network:} All participant nodes in {\name} are
connected via a gossip network as shown in the top part of
Figure~\ref{fig:architecture}. Whenever a node initiates a transaction, it
broadcasts the transaction to all other nodes via the gossip network (e.g.,
\phvname{Tx9} in Figure~\ref{fig:architecture}). Whenever a node generates a
new block, it broadcasts the block to all other nodes via the network
as well (e.g., the block \phvname{B} in Figure~\ref{fig:architecture}). {\name}
currently uses a modified gossip network implementation in Bitcoin
Core~\cite{Bitcoin-core} (see Section~\ref{sec:implementation}).

\noindent \textbf{Pending Transaction Pool:} Each node maintains a
pending transaction pool as shown in the bottom right of
Figure~\ref{fig:architecture}. The pool contains all transactions
that have been heard by the node but are not yet packed into any block.
Whenever a node receives a new transaction from the gossip network, the node
adds the transaction into its pool (e.g., adding \phvname{Tx9} into
the transaction pool in Figure~\ref{fig:architecture}). Whenever a node
discovers a new block (either by generating the block or by receiving the block
from other nodes), the node removes all transactions in the new block from its
pending transaction pool. For example, in Figure~\ref{fig:architecture}, the
node removes \phvname{Tx1}, \phvname{Tx2}, \phvname{Tx3}, and \phvname{Tx5}
from its pool after generating the block \phvname{B}. In {\name}
concurrent blocks might pack duplicate or conflicting transactions due
to network delays and malicious nodes. These transactions will be resolved by our consensus protocol.

\noindent \textbf{Block Generator:} Each node in {\name} runs a block generator
to generate valid new blocks to pack pending transactions as shown in the
bottom of Figure~\ref{fig:architecture}. {\name} operates with proof-of-work
(PoW) mechanism, so the block generator attempts to find solutions for PoW
problems to generate blocks. A valid block header must contain a PoW solution
in its header. Once such a valid block header is generated, it selects pending
transactions from the pool to fill the new block. Similar to Bitcoin, the PoW
mechanism maintains a stable network block generation rate via dynamically
adjusting the difficulty of the PoW problems. Note that besides PoW, {\name}
can also work with any other mechanism that can maintain a stable block
generation rate, such as proof-of-stake (PoS)~\cite{Casper, PeerCoin}.



\noindent \textbf{Local DAG State: } Each node maintains a local state that
contains all blocks which the node is aware of. Because in {\name} each block
may contain links to reference several previous blocks not just one, the result
state is a direct acyclic graph (DAG) as shown in center of
Figure~\ref{fig:architecture}. Whenever the node discovers a new block (either
by generating or by receiving it), the
node updates its local DAG state accordingly.

\subsection{Consensus Protocol}

The {\name} consensus protocol operates on the local DAG state of each
individual node. The goal of the protocol is to achieve the
consensus on a total order of generated blocks among all
nodes. From the total order of the blocks {\name} then derives a total order of
transactions inside those blocks to process these transactions. For conflicting
or duplicated transactions, {\name} only processes the first one and discards
the remaining ones.



\begin{figure*}
\centering
\includegraphics[scale=0.52]{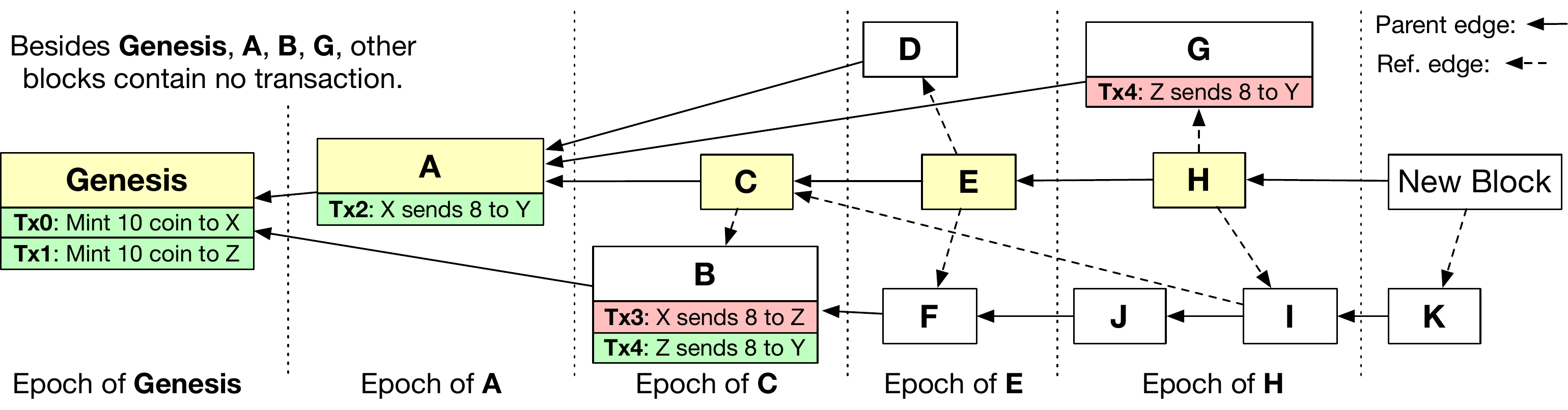}
\caption{An example local DAG state to illustrate the consensus algorithm of
{\name}. The yellow blocks are on the pivot chain in the DAG. 
Each block on the pivot chain forms a new epoch to partition
blocks in the DAG.}
\label{fig:example}
\end{figure*}

\noindent \textbf{DAG and Edges:} Figure~\ref{fig:example} presents a running
example of the local DAG state of a node in {\name}. We will use this example
in the remaining of this section to illustrate the high level ideas of the
{\name} consensus protocol. Each vertex in the DAG in
Figure~\ref{fig:example} corresponds to a block. In Figure~\ref{fig:example},
\phvname{Genesis} is the predefined genesis block. Only \phvname{Genesis},
\phvname{A}, \phvname{B}, and \phvname{G} are associated with transactions.
There are two kinds of edges in the DAG, parent edges and reference edges:

\begin{itemize}
\item \textbf{Parent Edge:} Each block except \phvname{Genesis} has exactly
one outgoing parent edge (solid line arrows in Figure~\ref{fig:example}). Intuitively, the parent edge corresponds to a voting relationship,
i.e., the node that generates the child block votes for the transaction history
represented by the parent block.
For example, there are one parent edge from \phvname{C} to \phvname{A} and one
from \phvname{F} to \phvname{B}.

\item \textbf{Reference Edge:} Each block can have multiple outgoing reference
edges (dashed lines arrows in Figure~\ref{fig:example}).
A reference edge corresponds to generated-before relationships
between blocks. For example, there is a reference edge from \phvname{E} to
\phvname{D}. It indicates that \phvname{D} is generated before \phvname{E}.
\end{itemize}

\noindent \textbf{Pivot Chain:}
Note that all parent edges in a DAG together form a \emph{parental tree} in
which the genesis block is the root. In the parental tree, {\name} selects a
chain from the genesis block to one of the leaf blocks as the \emph{pivot
chain}. 
Unlike the Bitcoin protocol which selects the longest chain in the
tree, {\name} selects the pivot chain based on the GHOST rule~\cite{GHOST}. Specifically, the
selection algorithm starts from the genesis block. At each step, it computes
the subtree sizes of each child in the parental tree and advances to the child
block with the largest subtree, until it reaches a leaf block. The advantage of
the GHOST rule is that it guarantees the
irreversibility of the selected pivot chain even when facing forks of honest
nodes due to network delays as the blocks in the forks also contribute
to the safety of the pivot chain according to the analysis~\cite{GHOST}.

In Figure~\ref{fig:example}, {\name} selects \phvname{Genesis}, \phvname{A},
\phvname{C}, \phvname{E}, and \phvname{H} as the pivot chain. Note that this is
not the longest chain in the parental tree and the longest chain is
\phvname{Genesis}, \phvname{B}, \phvname{F}, \phvname{J}, \phvname{I}, and
\phvname{K}. {\name} does not select this longest chain because the subtree of
\phvname{A} contains more blocks than the subtree of \phvname{B}. Therefore,
the chain selection algorithm selects \phvname{A} over \phvname{B} at its first
step.


\noindent \textbf{Generating New Block:} Whenever a node generates a new block,
it first computes the pivot chain in its local DAG state and sets the last
block in the chain as the parent of the new block. This makes the chain heavier
(i.e., the corresponding subtree of each block on the chain contains one extra
block), so that other nodes will be even more likely to select the same chain
as their pivot chains in future. The node then finds all tip blocks in the DAG
that have no incoming edge and creates reference edges from the new block to
each of those tip blocks. For example, if the node has a local state as shown
in Figure~\ref{fig:example}, when generating a new block, the node will choose
\phvname{H} as the parent of the new block and will create a reference edge
from the new block to \phvname{K}.

\noindent \textbf{Epoch:} 
Parent edges, reference edges, and the pivot chain together enable {\name} to
split all blocks in a DAG into epochs. As shown in Figure~\ref{fig:example},
every block in the pivot chain corresponds to one epoch. Each epoch contains
all blocks 1) that are reachable from the corresponding block in the pivot
chain via the combination of parent edges and reference edges and 2) that are
not included in previous epochs. For example, in Figure~\ref{fig:example},
\phvname{J} belongs to the epoch of \phvname{H} because \phvname{J} is
reachable from \phvname{H} but not reachable from the previous pivot chain
block, \phvname{E}.

\noindent \textbf{Block Total Order:} {\name} determines the total order of
the blocks in a DAG as follows. {\name} first sorts the blocks based on their
corresponding epochs and then sorts the blocks in each epoch based on their
topological order. If two blocks in an epoch have no partial order
relationship, {\name} breaks ties deterministically with the unique ids of the
two blocks. For the local DAG in Figure~\ref{fig:example}, {\name} obtains the
total order as the following: \phvname{Genesis}, \phvname{A}, \phvname{B},
\phvname{C}, \phvname{D}, \phvname{F}, \phvname{E}, \phvname{G}, \phvname{J},
\phvname{I}, \phvname{H}, and \phvname{K}.

\noindent \textbf{Transaction Total Order:} {\name} first sorts
transactions based on the total orders of their enclosing blocks. If two transactions belong to the
same block, {\name} sorts the two transactions based on the appearance order in
the block.

{\name} checks the conflicts of the transactions at the same time when deriving
the orders. If two transactions are conflicting with each other, {\name} will
discard the second one. If one transaction appears in multiple blocks, {\name}
will only keep the first appearance and discard all redundant ones. In
Figure~\ref{fig:example}, the transaction total order is \phvname{Tx0},
\phvname{Tx1}, \phvname{Tx2}, \phvname{\st{Tx3}}, \phvname{Tx4}, and
\phvname{\st{Tx4}}. {\name} discards \phvname{Tx3} because it conflicts with
\phvname{Tx2}.

\subsection{Security Analysis}

We next discuss potential attack strategies and explain why {\name} is safe
against these attacks. For a more formal discussion of the {\name} safety, see
Section~\ref{sec:algorithm:correctness}. 

Now suppose an attacker wants to revert the transaction \phvname{Tx4} in the
block \phvname{B} in Figure~\ref{fig:example}. To do so, the attacker needs to
revert the agreed total order of the blocks so that the attacker can insert an
attacker block before \phvname{B} and that the attacker block contains a
transaction which conflicts with \phvname{Tx4}. One naive attack strategy is to
link such an attacker block to existing blocks in very early epochs, e.g.,
setting \phvname{Genesis} as the parent. However, because the attacker block is
new with no children, it will not become a pivot chain block. By our epoch
definition, the attacker block has to wait for the references of future pivot
chain blocks. The block will still belong to a future epoch despite the fact
that it sets \phvname{Genesis} as its parent. Therefore the block will not
appear before \phvname{B} in the total order.

Therefore to revert a transaction enclosed by a block, the attacker has to
revert the partition scheme of epochs before the block. Because the epoch
partition scheme is deterministically defined based on the pivot chain, the
attacker therefore has to revert the corresponding blocks on the pivot chain.
If the attacker attempts reverting the pivot chain, the situation is then similar
to the double spending attack in chain-based Nakamoto consensus protocol. In
this situation, the safety property of {\name} relies on the fact that all
honest nodes will continue to work on the pivot chain to make the pivot chain
longer and heavier. Since honest nodes together have more block generation
power than the attacker, as the time passes by, early blocks on the pivot chain
are increasingly irreversible for the attacker.

\noindent \textbf{Transaction Confirmation: } A user can confirm his or her
transaction with the following strategy. The user locates the first epoch that
contains a block including the transaction. The user identifies the
corresponding pivot chain block of the epoch. The user then decides how much
risk he or she can tolerate based on the estimations of the block generation
power that the attacker controls. The user finally estimates the risk of the pivot chain block being
reverted using the formula in Section~\ref{sec:algorithm:correctness} to decide
whether to confirm the transaction.


\section{{\name} Consensus}
\label{sec:algorithm}

In this section we present the {\name} consensus algorithm and discuss its safety and
liveness properties.

\subsection{Consensus Algorithm}

At any time, the local state of a user in the {\name} protocol is a graph $G =
\langle B, g, P, E\rangle$. $B$ is the set of blocks in $G$. $g \in B$ is the
genesis block. $P$ is a function that maps a block $b$ to its parent block
$P(b)$. Specially, $P(g) = \bot$. $E$ is the set of directed reference edges
and parent edges in this graph. $e = \langle b, b' \rangle \in E$ is an edge
from the block $b$ to the block $b'$, which denotes that $b'$ happens before $b$.
Note that there is always a parent edge from a block to its parent block (i.e.,
$\forall b \in B, \langle b, P(b) \rangle \in E$).
All nodes in the {\name} protocol share a predefined deterministic hash
function $\mathrm{Hash}$ that maps each block in $B$ to a unique integer id. It
satisfies that $\forall b \neq b', \mathrm{Hash}(b) \neq \mathrm{Hash}(b')$. 

\begin{figure}
\small
\framebox{
\begin{minipage}{1.5in}
\begin{displaymath}
G = \langle B, g, P, E \rangle
\end{displaymath}
\end{minipage}
}
\begin{displaymath}
\begin{array}{l}
\mathrm{Chain}(G, b) = \left\{
\begin{array}{ll}
g & b = g\\
\mathrm{Chain}(G, P(b)) \circ b & \text{otherwise} \\
\end{array}
\right.\\
\mathrm{Child}(G, b) = \{ b' \mid P(b') = b \} \\
\mathrm{Sibling}(G, b) = \mathrm{Child}(G, P(b)) - \{b\}\\ 
\mathrm{Subtree}(G, b) = (\cup_{i \in \mathrm{Child}(G, b)}{\mathrm{Subtree}(G, i)}) \cup \{b\} \\
\mathrm{Before}(G, b) = \{b' \mid b' \in B, \langle b, b' \rangle \in E\} \\
\mathrm{Past}(G, b) = (\cup_{i \in \mathrm{Before}(G, b)} {\mathrm{Past}(G, i)}) \cup \{b\} \\
\mathrm{TotalOrder}(G) = \mathrm{ConfluxOrder}(G, \mathrm{Pivot}(G, g))
\end{array}
\end{displaymath}
\caption{The Definitions of $\mathrm{Chain}()$, $\mathrm{Child}()$, $\mathrm{Sibling}()$, $\mathrm{Subtree}()$,
$\mathrm{Before}()$, $\mathrm{Past}()$, 
and $\mathrm{TotalOrder}()$.}
\label{fig:utils}
\end{figure}

We next define several utility functions and notations. $\mathrm{Chain}()$
returns the chain from the genesis block to a given block following only parent
edges. $\mathrm{Child}()$ returns the set of child blocks of a given block.
$\mathrm{Sibling}()$ returns the set of siblings of a given block.
$\mathrm{Subtree}()$ returns the sub-tree of a given block in the parental
tree. $\mathrm{Before}()$ returns the set of blocks that are immediately
generated before a given block. $\mathrm{Past}()$ returns the set of blocks
that are generated before a given block (but including the block itself). 
Figure~\ref{fig:utils} presents the definition of these utility functions. In
the rest of this section, we use ordered lists to denote chains and serialized
orders. ``$\circ$'' denotes the concatenation of two ordered lists.

\begin{algorithm}[t]
\small
\SetNlSty{}{}{}
\DontPrintSemicolon
\SetKw{To}{ to }
\SetKw{In}{ in }
\SetKw{Or}{ or }
\SetKw{And}{ and }
\SetKwInOut{Input}{Input}\SetKwInOut{Output}{Output}
\SetKw{Define}{define}
\Input{The local state $G = \langle B, g, P, E \rangle$ and a starting block $b \in B$}
\Output{The last block in the pivot chain for the subtree of $b$ in $G$}
\let\oldnl\nl
\newcommand{\nonl}{\renewcommand{\nl}{\let\nl\oldnl}}

\If {$\mathrm{Child}(G, b) = \emptyset$} {
    \Return $b$\;
}
\Else {
    $s \longleftarrow \bot$ \;
    $w \longleftarrow -1$\;
    \For {$b' \in \mathrm{Child}(G, b)$}{
        $w' \longleftarrow |\mathrm{Subtree}(G, b')|$\;
        \If {$w' > w$ \Or $w' = w$ \And $\mathrm{Hash}(b') < \mathrm{Hash}(s)$} {
            $w \longleftarrow w'$\;
            $s \longleftarrow b'$\;
        }
    }
    \Return $\mathrm{Pivot}(G, s)$\;
}
\caption{The definition of $\mathrm{Pivot}(G, b)$.}
\label{fig:pivot}
\end{algorithm}

\noindent \textbf{Pivot Chain Selection: } Figure~\ref{fig:pivot}
presents our pivot chain selection algorithm (i.e., the definition of
$\mathrm{Pivot}()$). Given a DAG state $G$,
$\mathrm{Pivot}(G, g)$ returns the last block in the pivot chain starting
from the genesis block $g$. The algorithm recursively advances to the child
block whose corresponding subtree has the largest number of blocks (lines
4-10). When there are multiple child blocks with the same number, the algorithm
selects the child block with the smallest unique hash id (line 8). The
algorithm terminates until it reaches a leaf block (lines 1-2).




\begin{algorithm}[t]
\small
\SetNlSty{}{}{}
\DontPrintSemicolon
\SetKw{To}{ to }
\SetKw{In}{ in }
\SetKw{And}{ and }
\SetKwInOut{Input}{Input}
\SetKwInOut{State}{Local State}
\SetKwFor{Event}{upon event}{do}{}
\State{A graph $G = \langle B, g, P, E \rangle$}
\let\oldnl\nl
\newcommand{\nonl}{\renewcommand{\nl}{\let\nl\oldnl}}
\While{\text{Node is running}}{
    \Event{\text{Received $G'= \langle B', g, P', E' \rangle$}}{
        $G'' \longleftarrow \langle B \cup B', g, P \cup P', E \cup E' \rangle$\;
        \If {$G \neq G''$}{
            $G \longleftarrow G''$\;
            \text{Broadcast the updated $G$ to other nodes}\;
        }
    }
    \Event{\text{Generated a new block $b$}}{
        $a \longleftarrow \mathrm{Pivot}(G, g)$\;
        $E' \longleftarrow E \cup \{\langle b, t \rangle \mid \forall b' \in B, \, \langle b', t \rangle \notin E\}$\;
        $G \longleftarrow \langle B \cup \{b\}, g, P[b \mapsto a], E' \rangle$\;
        \text{Broadcast the updated $G$ to other nodes}\;
    }
}
\caption{The {\name} Main Loop}
\label{fig:main-loop}
\end{algorithm}

\noindent \textbf{Consensus Main Loop:} Figure~\ref{fig:main-loop} presents the
main loop of a node running the consensus algorithm. It processes two kinds of
events. The first kind of events is receiving DAG update information from other
nodes via the underlying gossip network. The node updates its local state
accordingly (lines 2-3) and relays the local state via the gossip network (lines
4-6). Note that here we assume that the underlying gossip network already
verified the integrity of the received information (i.e., cryptographic
signatures in the blocks).

The second kind of events corresponds to the local block generator successfully
generating a new block $b$. In this case, the node adds $b$ into its local DAG
$G$ and updates $G$ as follows. It first sets the last block of the pivot chain
in its local DAG as the parent of $b$ (lines 8 and 10). The node also finds all
of those blocks in the local DAG that have no incoming edges and creates a
reference edge from $b$ to each of those blocks (line 9). The node finally
broadcast the updated $G$ to other nodes via the gossip network. Note that $P[b
\mapsto a]$ at line 10 denotes the result map of mapping $b$ to $a$ in $P$ (other mappings
in $P$ are unchanged).

Note that for brevity, the pseudo-code in Figure~\ref{fig:main-loop} broadcasts
the whole graph to the network. In our {\name} implementation, {\name}
broadcasts and relays each individual block to avoid unnecessary network
transmissions. See Section~\ref{sec:implementation} for the details.

\begin{algorithm}[t]
\small
\SetNlSty{}{}{}
\DontPrintSemicolon
\SetKw{To}{ to }
\SetKw{In}{ in }
\SetKw{And}{ and }
\SetKwInOut{Input}{Input}\SetKwInOut{Output}{Output}
\Input{The local state $G = \langle B, g, P, E \rangle$ and a block $a$}
\Output{A list of blocks $L = b_1 \circ b_2 \circ \ldots \circ b_n$, where $b_1 = g$ and $\forall 1 \le i \le n, b_i \in B$}
\let\oldnl\nl
\newcommand{\nonl}{\renewcommand{\nl}{\let\nl\oldnl}}

$a' \longleftarrow P(a)$\;
\If {$a' = \bot$} {
    \Return $a$\;
}
$L \longleftarrow \mathrm{ConfluxOrder}(G, a')$\;
$B_{\Delta} \longleftarrow \mathrm{Past}(G, a) - \mathrm{Past}(G, a')$\;
\While {$B_{\Delta} \neq \emptyset$} {
$G' \longleftarrow \langle B_{\Delta}, g, P, E \rangle$\;
$B_{\Delta}' \longleftarrow \{x \mid |\mathrm{Before}(G', x)| = 0\}$\;
Sort all blocks in $B_{\Delta}'$ in order as $b'_1, b'_2, \ldots, b'_k$ \\
\quad such that $\forall 1 \leq i < j \leq k, \, \mathrm{Hash}(b'_i) < \mathrm{Hash}(b'_j)$\;
$L \longleftarrow L \circ b'_1 \circ b'_2 \circ \ldots \circ b'_k$\;
$B_{\Delta} \longleftarrow B_{\Delta} - B_{\Delta}'$\;
}
\Return $L$\;
\caption{The Definition of $\mathrm{ConfluxOrder}()$.}
\label{fig:order}
\end{algorithm}

\noindent \textbf{Total Order:} Figure~\ref{fig:order} defines
$\mathrm{ConfluxOrder}()$, which corresponds to our block ordering algorithm.
Given the local state $G$ and a block $a$ in the pivot chain,
$\mathrm{ConfluxOrder}(G, a)$ returns the ordered list of all blocks that appear
in or before the epoch of $a$. Using $\mathrm{ConfluxOrder}()$, the total order
of a local state $G$ is defined as $\mathrm{TotalOrder}(G)$ in Figure~\ref{fig:utils}.

The algorithm in Figure~\ref{fig:order} first recursively orders all
blocks in previous epochs (i.e., the epoch of $P(a)$ and before). It then 
computes all blocks in the epoch of $a$ as $B_{\Delta}$ (line 5). It
topologically sorts all blocks in $B_{\Delta}$ and appends it into the result
list (lines 6-12). The algorithm uses the unique hash id to break
ties (lines 8-9).

\subsection{Assumptions and Parameters}
\label{assumption}

We next present the protocol assumptions that are important to our discussion
of the safety and liveness properties of the consensus algorithm.
Our protocol has the following assumptions and relevant parameters.

\noindent \textbf{Block Generation Rate: } The network together has a block
generation rate of $\lambda$. We use $\lh$ to denote the combined block
generation rate of honest nodes. We use $\lambda_{\text{a}} = q \cdot \lh$ to
denote the block generation rate of the attacker. Therefore $\lambda = \lh +
\lambda_{\text{a}}$ and $0 \le q < 1$.

\noindent \textbf{$d$-Synchronous: } We assume the underlying gossip network
provides $d$-synchronous communications for all honest nodes. If at time $t$,
one honest node broadcast a block or transaction via the gossip network, then
before time $t+d$, all honest nodes will receive this block and add this block
into their local states. 

\noindent \textbf{Adversary Model: } 
The attacker can choose arbitrary strategies to disrupt honest nodes. 
We assume two limitations to the capability of the attacker. Firstly, 
the attacker does not have the capability to reverse cryptographic functions.
Therefore honest nodes can reliably verify the integrity of a block in the
presence of the attacker. Secondly, all honest nodes combined
together have stronger block generation power than the attacker, i.e., as
we noted before $\lambda_{\text{a}} = q \cdot \lh$ and $0 \le q < 1$.


When the attacker generates a new block, the attacker can create parent and
reference edges from the new block to arbitrary existing blocks, not
necessarily following our protocol. Note that because all blocks in {\name} is
protected by cryptographic functions, the attacker however cannot modify edges
associated with an already generated block even if the block is generated by
the attacker himself/herself. The attacker can also withhold this new block for
a certain period of time before broadcasting the block. Because {\name}
implements a stale block detection mechanism similar to Bitcoin (see
Section~\ref{sec:implementation}), the attacker cannot withhold blocks forever
(but it can do so for hours). If the attacker decides to send a block to any
honest node, due to our $d$-synchronous assumption, all honest nodes will see
the block after a while.


\subsection{Correctness}
\label{sec:algorithm:correctness}

\noindent \textbf{Safety:} {\name} applies the GHOST rule to select its pivot
chain. The pivot chain of {\name} therefore satisfies the same safety property
as the selected chain in GHOST rule~\cite{GHOST}. As long as the attacker
controls less than half of the block generation power, it is highly unlikely to
revert an old common pivot chain block shared by all honest nodes~\cite{GHOST}.
Because all honest nodes will contribute toward the subtree of the pivot chain
block, after waiting a sufficiently long period of time it will be impossible
for the attacker to forge a subtree without the pivot chain block heavier than
the subtree of the pivot chain block~\cite{GHOST}. 

Because the algorithm in Figure~\ref{fig:pivot} is deterministic, once a prefix
of the pivot chain becomes stablized in all nodes (i.e., attacker cannot
revert), the algorithm will produce the same prefix of block orders for all
nodes.  This total order prefix is therefore irreversible because of the
irreversiblility of the pivot chain.

\if 0
The pivot chain in {\name} therefore satisfies the following lemma
(Proposition 2 in \cite{GHOST}):

\begin{lemma}
Suppose $b$ is a block generated at time $t$ and $b$ is on the pivot chain of
all honest nodes at time $t+\tau$. The chance of $b$ failing off the pivot
chain after $t+\tau$ goes to zero as $\tau$ goes to infinity.
\label{lem:pivot}
\end{lemma}

Lemma~\ref{lem:pivot} states that as long as the attacker controls less than
half of the block generation power, early blocks on the pivot chain are
stabilized. The rationale of Lemma~\ref{lem:pivot} is that all honest nodes
will eventually generate new blocks only under the subtree of $b$. Because the
combined block generation rate of all honest nodes is greater than the
attacker, by the large number law, after waiting a sufficiently long period of
time it will be impossible for the attacker to make a subtree without $b$
heavier than the subtree of $b$. See \cite{GHOST} for the proof of this lemma. 

For a pivot chain block $b$, we define the $b$-prefix order as the prefix of
the block total order that ends with $b$. In {\name}, transactions in $b$ are
irreversible, if the $b$-prefix orders of all honest nodes no longer change. We
have the following theorem for the irreversibility of the $b$-prefix order.


\begin{theorem}
Suppose $G$ is the local DAG of an honest node at time $t$, all blocks in
$\mathrm{Chain}(G, b)$ are generated before $t$, $\mathrm{Chain}(G, b)$ is the
common prefix of the pivot chains of all honest nodes at time $t+\tau$. Then 1)
the $b$-prefix orders of all honest nodes agree with each order at time
$t+\tau$ and 2) the chance of the $b$-prefix order of any honest node changing
after $t + \tau$ goes zero as $\tau$ goes to infinity.
\label{the:safety}
\end{theorem}

\begin{proof}
	Because $\mathrm{Chain}(G, b)$ is the common prefix of the pivot chains of all
	honest nodes and the algorithm in Figure~\ref{fig:order} generates a
	common total order prefix given a common pivot chain prefix, so 1) holds.
	Applying Lemma~\ref{lem:pivot} to each block in $\mathrm{Chain}(G, b)$, we know
	that the chance of $\mathrm{Chain}(G, b)$ no longer being the common prefix of
	the pivot chains of all honest nodes after $t+\tau$ goes to zero as $\tau$ goes
	to infinity. Again because the algorithm in Figure~\ref{fig:order}
	generates the same total order prefix if the pivot chain prefix does not
	change, the chance of the algorithm producing a different $b$-prefix order goes
	to zero as $\tau$ goes to infinity.
\end{proof}
\fi

\noindent \textbf{Liveness: } 
As discussed above, the consensus of the block total order prefix in {\name}
depends on the consensus of the pivot chain prefix.  {\name} therefore has the
same liveness property as the GHOST protocol~\cite{GHOST}, i.e., eventually new
blocks will be appended to the common prefix of the pivot chain. 

\if 0
We have the following theorem to guarantee that there will always be
new blocks appended to the common prefix of the pivot chain.

\begin{theorem}
At any time $t$, $\exists \alpha < \infty$ such that at least one new block will be added
to the common prefix of the pivot chains of all honest nodes at time $t + \alpha$.
\end{theorem}

\begin{proof}
	We consider a special event at $t + \alpha$ such that 1) there
	is at least one block generated during $[t, t+\alpha - (c + 1) \cdot d)$ and 2)
	the whole system does not generate any block during $[t+\alpha - (c + 1) \cdot
	d, t+\alpha)$, where $c$ is the number of blocks the attacker withheld at the
	start of this time period. Because the attacker can only withhold generated
	blocks for a bounded period of time (e.g., hours), $c$ is bounded. Therefore
	this event will eventually occur. Once this event occurred, all honest nodes
	would see the same DAG states. Because the algorithm in Figure~\ref{fig:pivot}
	is deterministic, all honest nodes would therefore select the same pivot chain
	and the chain would include at least one new block generated during $[t,
	t+\alpha - (c + 1) \cdot d)$. 
\end{proof} 
\fi

\noindent \textbf{Confirmation: } For any block $b'$ in a DAG, suppose $b'$
belongs to the epoch of $b$, where $b$ is a pivot chain block. We can confirm
the transactions in $b'$ as long as $b$ becomes irreversible on the pivot
chain. Like standard Nakamoto consensus, although we can wait sufficiently long
to make the risk of an attacker reverting $b$ arbitrarily low. The risk will
always be greater than zero. We have the following lemma and theorem to bound
the risk of confirming $b$ on the pivot chain:

Suppose $b$ is a block on the pivot chain of all honest nodes during the
time $[t-d, t]$ and $P(b)$, the parent of $b$, is generated at time zero. The
chance of $b$ being kicked out of the pivot chain by one of its sibling blocks
$a$ is no more than:
\begin{displaymath}
\sum_{k=0}^{n-m}\zeta_k \cdot q^{n-m-k+1} + \sum_{k=n-m+1}^{\infty} \zeta_k
\end{displaymath}
where $n$ is the number of blocks in the subtree of $b$ before time $t-d$, $m$ is the number of blocks in the subtree of $a$ 
generated by honest nodes, and $\zeta_k = e^{-q\lh t}\frac{(q\lh t)^k}{k!}$.
This is a direct application of Theorem 10 in \cite{GHOST}.
It provides us a way to estimate the stability of each individual block on the
pivot chain. 

Note that the stability of a pivot
chain prefix is determined by the least stable block in the prefix.
Suppose $b$ is a block on the pivot chains of all honest nodes during the time $[t-d, t]$.
The chance of $b$ falls off the pivot chain is no more than:
\begin{displaymath}
\max_{\substack{a\in \mathrm{Chain}(G,b)\\ a'\in \mathrm{Sibling}(a)}} \mathrm{Pr}[\text{$a$ is kicked out of the pivot chain by $a'$}]
\end{displaymath}

\section{Implementation}
\label{sec:implementation}

We have implemented both {\name} and GHOST (for comparison in our evaluation)
based on the Bitcoin Core codebase v0.16.0~\cite{Bitcoin-core}.

\noindent \textbf{Block Header: }
To implement reference edges in {\name}, we modified the block header structure
in Bitcoin to include 32-byte block header hashes for each of its outgoing
reference edges.
Our experimental results show that this introduces a negligible overhead of
less than 960 bytes per block. Note that if a proof-of-work (PoW) scheme is
used to generate new blocks, these reference hashes must be included as part of
the puzzle to avoid having attackers be able to generate blocks with different
references at essentially zero-cost.

\noindent \textbf{Gossip Network: } 
Both {\name} and GHOST require to maintain the full structure of the tree/DAG
of blocks, while Bitcoin Core only propagates blocks in the identified longest
chain. We therefore modified the gossip network layer of Bitcoin Core to
broadcast and relay all blocks. To ensure that when a block is delivered to the
consensus layer, all of its past blocks are already delivered, {\name}
maintains the validity for each block. Whenever {\name} receives a block, it
traverses the DAG structure using breadth-first search (BFS) and updates the
validity of each traversed block. A block is valid if and only if {\name} has
received all past blocks (i.e., blocks that are reachable via parent and
reference edges) of the block. {\name} then delivers all of the newly validated
blocks to the consensus layer.

\noindent \textbf{Detecting Stale Blocks: }
Bitcoin Core has the following mechanism to detect stale blocks (e.g., blocks
generated and withheld by attackers). Each node periodically synchronizes with
its peers to maintain a network-adjusted time, which is the median of the
timestamps returned by its peers. Each block in Bitcoin Core is also
timestamped. A new block will be flagged as invalid if the timestamp of the new
block is earlier than the median timestamp of previous 11 blocks or if the
timestamp of the new block is two hours later than the network-adjusted time.

We use the same mechanism in Bitcoin Core with the following modifications.
First, we modify the rule proportionally to the block generation rates which we
use in our experiments. For example, if the system generates a block every 20
seconds (i.e., 30 times faster than Bitcoin), then a block is considered
invalid if its timestamp is earlier than the median timestamp of previous 330
blocks. Secondly, {\name} does not delete blocks with invalid timestamps. It
simply ignores this invalid block when counting the number of blocks in a
subtree for selecting the pivot chain. The rationale is that as long as the
invalid block no longer affects the partition scheme of already stable epochs,
it is safe to include the block into future epochs, processing transactions
inside the block.

\noindent \textbf{Bootstrapping a Node: }
When a node starts, it will handshake with each of its peers and run a
bootstrapping one-way synchronization process to update its local state. For
both GHOST and {\name}, the node needs to download all blocks in the tree/DAG
from its peers, not just the selected chain. To implement this one-way
synchronization, we enhanced the Bitcoin Core codebase with four extra message
types: \texttt{gettips}, \texttt{tips}, \texttt{getchains} and \texttt{chains}.
For sake of simplicity, let us consider the case where a node \textit{A}
attempts to download the blocks from another node \textit{B}. \textit{A} first
sends \textit{B} a \texttt{gettips} message requesting the list of tips, i.e.,
leaf blocks in the (parental) tree (represented by their 32-byte hashes).
\texttt{tips} is used in response to \texttt{gettips} to retrieve the block
hash of all the tips \textit{B} is aware of. For each of \textit{B}'s tip,
\textit{A} checks whether this is an unknown block or not and packs all the
answers in a \texttt{getchains} message. Upon receiving this \texttt{getchains}
message, for each new tip to \textit{A}, \textit{B} computes the last known
block in the chain from the genesis block to this tip and sends \textit{A} a
\texttt{chain} message containing the list of blocks starting right after the
last known block.

\section{Experimental Results}
\label{sec:results}

We next present a quantitative evaluation of {\name} to answer the following questions:
\begin{enumerate}
\item What is the throughput that {\name} can achieve for obtaining the block
total order?
\item What is the confirmation time that a user must wait in {\name} for
obtaining high confidence of irreversibility? How does this confirmation time
correlate with the block generation power that the attacker controls and
the risk that the user is willing to tolerate?
\item How does {\name} compare to previous chain-based Nakamoto consensus
protocols like Bitcoin~\cite{bitcoin} and GHOST~\cite{GHOST}?
\item How does {\name} scale as the network bandwidth changes? How does {\name}
scale as the number of full nodes grow?
\end{enumerate}

We deployed {\name} on up to 800 Amazon EC2 m4.2xlarge virtual machines (VM),
each of which has 8 cores and 1Gbps network throughput. By default, we run 25
{\name} full nodes in each VM and limit the bandwidth of each full node to
20Mbps. To model the network latency, we use the inter-city latency
measurements~\cite{citydelay} and assign each VM to one of 20 major cities. We
simulate the inter city delay by inserting artificial delays before the message
delivery. For each full node, the gossip network of {\name} connects it to an
average of 10 randomly selected peers. In our experiments, we assign all full
nodes with an equal block generation power. For each generated block, we use
the testing utilities in Bitcoin core code base to fill the block full with
artificial transactions. To avoid unnecessary PoW computation, we simulate the
mining with a poisson process.

We then deployed Bitcoin and our implementation of GHOST under the same
setup as {\name}. We run up to 20k full nodes in our
experiments. At April 2018, Bitcoin has fewer than 12k full
nodes~\cite{bitcoin-fullnode} and Ethereum~\cite{eth-fullnode} has fewer than
17k full nodes. Our experiments are at the same scale as those real-world
cryptocurrencies.

Note that in our experiments we measure the network diameter $d$ as the
propagation time for 99\% of the blocks to reach all full nodes. This enables
us to see the trend of network diameter (see
Section~\ref{sec:results:scalability}). There is always a few node generating
blocks when they are lagging behind, causing longer delays. Note that this
phenomena does not affect the correctness of {\name} because we can simply
count those nodes that lags behind as malicious nodes. Also note that For all
the experiments, we monitored the overhead introduced by {\name} due to its
DAG-based approach. The computation overhead is negligible compared to the PoW
puzzles, and the spatial overhead is a maximum of 960 bytes per block, which is
small compared to typical block size in several MBs. 

\if 0
\noindent \textbf{Run with Different Configurations: } 
To evaluate the performance gain, we run {\name}, Bitcoin,
and GHOST with 10k full nodes (i.e., 500 VMs) under the following
configurations: 1) Setting the block size limit at 4MB while increasing the
block generation rate from 80 seconds per block to 5 seconds per block; 2)
setting the block generation rate at 20 seconds per block while increasing the
block size limit from 1MB to 8MB. For each of these configurations we run each
protocol for two hours.

To study the scalability of {\name}, we also run {\name} the following
configurations: 1) setting the block size limit at 4MB and the block generation
rate at 10s per block (4MB/10s) but increasing the number of deployed full
nodes from 2.5k to 20k; 2) lifting the bandwidth limit to 40Mbps per node and
setting the block size limit at 4MB and the block generation rate at 2.5s per
block (4MB/2.5s).
\fi 





\subsection{Throughput}

\begin{figure}[t]
	\includegraphics[width=\linewidth]{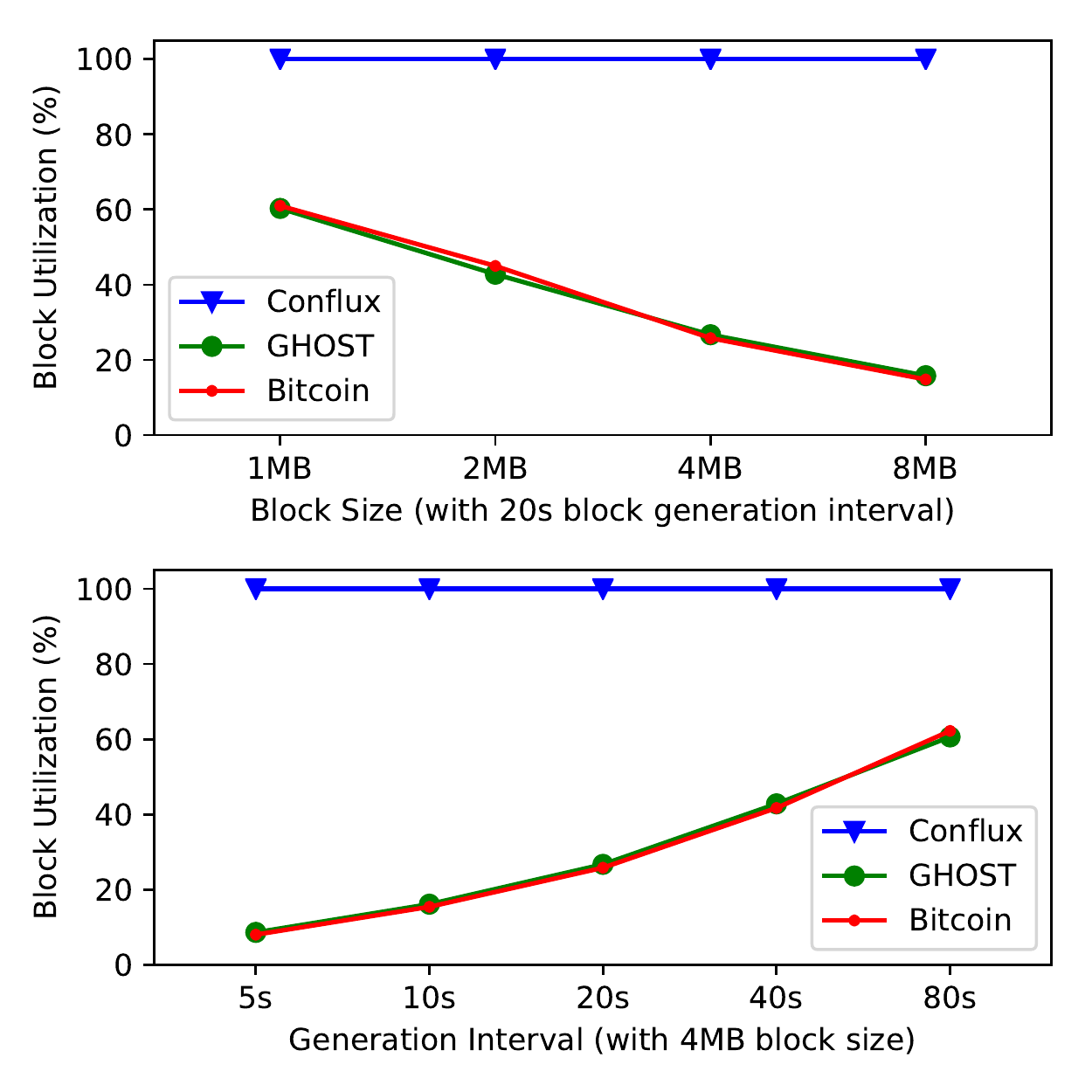}
	\caption{Block utilization ratio.}
    \label{fig:blockutil}%
\end{figure}

To evaluate the throughput improvement, we run {\name}, Bitcoin,
and GHOST with 10k full nodes (i.e., 400 VMs) under the following
two configurations: 
1) increasing the block size limit from 1MB to 8MB with fixed 
block generation rate at 20 seconds per block;
2) decreasing the block generation rate from 5 seconds per block to 
80 seconds per block with fixed block size limit at 4MB.
We run each protocol for two hours for each configuration.

Figure~\ref{fig:blockutil} presents the results,
where X-axis corresponds to different
configuration settings and Y-axis tells the 
correspondent block utilization ratio. 
For Bitcoin and GHOST, this ratio
corresponds to the portion of blocks in the selected chain. For {\name}, this
number is always 1 because all blocks will be eventually included.
Note that the consensus protocol throughput is the
multiplication of three numbers: the block size limit, the block generation
rate, and the block utilization ratio. 

The results show that {\name} achieves a throughput of 2.88GB/h under the block
generation setting 4MB/5s. If we assume the same transaction size as the
real-world Bitcoin network, {\name} could process 3200 transactions per second.
In fact, our results indicate that the throughput of {\name} is only limited
by the processing capability of each individual node, i.e., {\name} can achieve
even higher throughput if we lift the bandwidth limit to 40Mbps (See
Section~\ref{sec:results:scalability}).

Our results also tell that as the block size and
the block generation rate increase, more blocks are generated in
parallel. For Bitcoin and GHOST, this indicates an increasing
number of forks in the resulting block trees. For example, under the block
generation setting 4MB/5s, only 8\% and 8.6\% of blocks are on the agreed chains of
Bitcoin and GHOST, respectively. Blocks in forks will not be included in the
result total order and resources are wasted for generating those blocks. Unlike
Bitcoin and GHOST, {\name} is capable of processing all blocks. {\name}
therefore achieves significantly higher throughputs than Bitcoin and GHOST,
especially when the block size is large or the block generation rate is fast. 


\subsection{Confirmation Time}

\begin{figure}[t]
	\includegraphics[width=\linewidth]{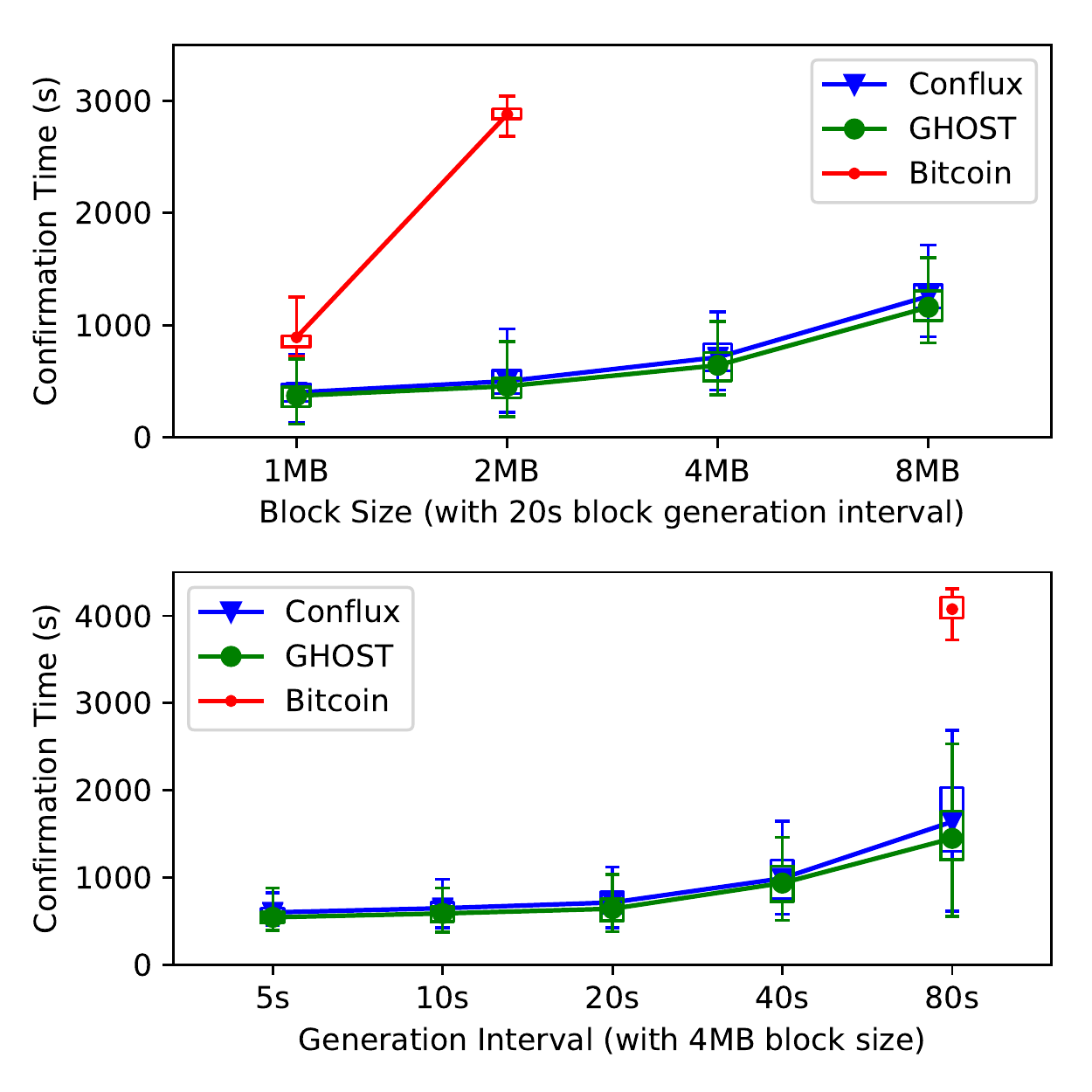}
	\caption{Confirmation time.}
	\label{fig:confirmation}%
\end{figure}

Figure~\ref{fig:confirmation} presents the average confirmation time of {\name},
Bitcoin, and GHOST under different configurations 
as in the same experiments above.
Confirmation time is the
time duration that a user has to wait to obtain a high confidence that the
total order of a block will not change (i.e., the prefix of the total order
before this block does not change). In this setting, 
the user confirms a block if the attacker has
less than 0.01\% chance to revert its transaction, assuming the attacker
controls less than 20\% of the network block generation power (i.e., $q <
0.25$). 
The error bar in Figure~\ref{fig:confirmation} shows the medium, 25\%, 75\%,
minimum, and maximum confirmation time of blocks for each protocol and each
configuration.

Our results show that {\name} can confirm blocks in minutes. When using the
block generation setting of 4MB/5s, {\name} confirms blocks in 10.0min on
average. Under all configurations, users in {\name} wait for a similar
confirmation time as GHOST. This is expected, because the confirmation of
{\name} blocks relies on the confirmation of the corresponding pivot chain
blocks which follows the same GHOST rule. Note that under all settings except
1M/20s, 2M/20s, and 4M/80s, Bitcoin is unable to confirm any block because the
ratio of the longest chain is too small against any attacker with 20\% of the
network block generation power.

The results also show that as the block size increases, blocks take longer to get
confirmed on all three protocols. This is because as explained in
Figure~\ref{fig:blockutil}, using larger blocks will cause more blocks
generated in parallel. Some nodes may temporarily generate blocks that are not
under the last block of the chain (or the pivot chain in {\name}).

Our results further show that increasing the block generation will grow the
chain (or the pivot chain in {\name}) faster and therefore confirm blocks
faster. But this effect diminishes as the block generation rate approaches the
processing capability of individual nodes, because frequent concurrent blocks
and forks will slow down the confirmation. 


\begin{figure}[t]
	\includegraphics[width=\linewidth]{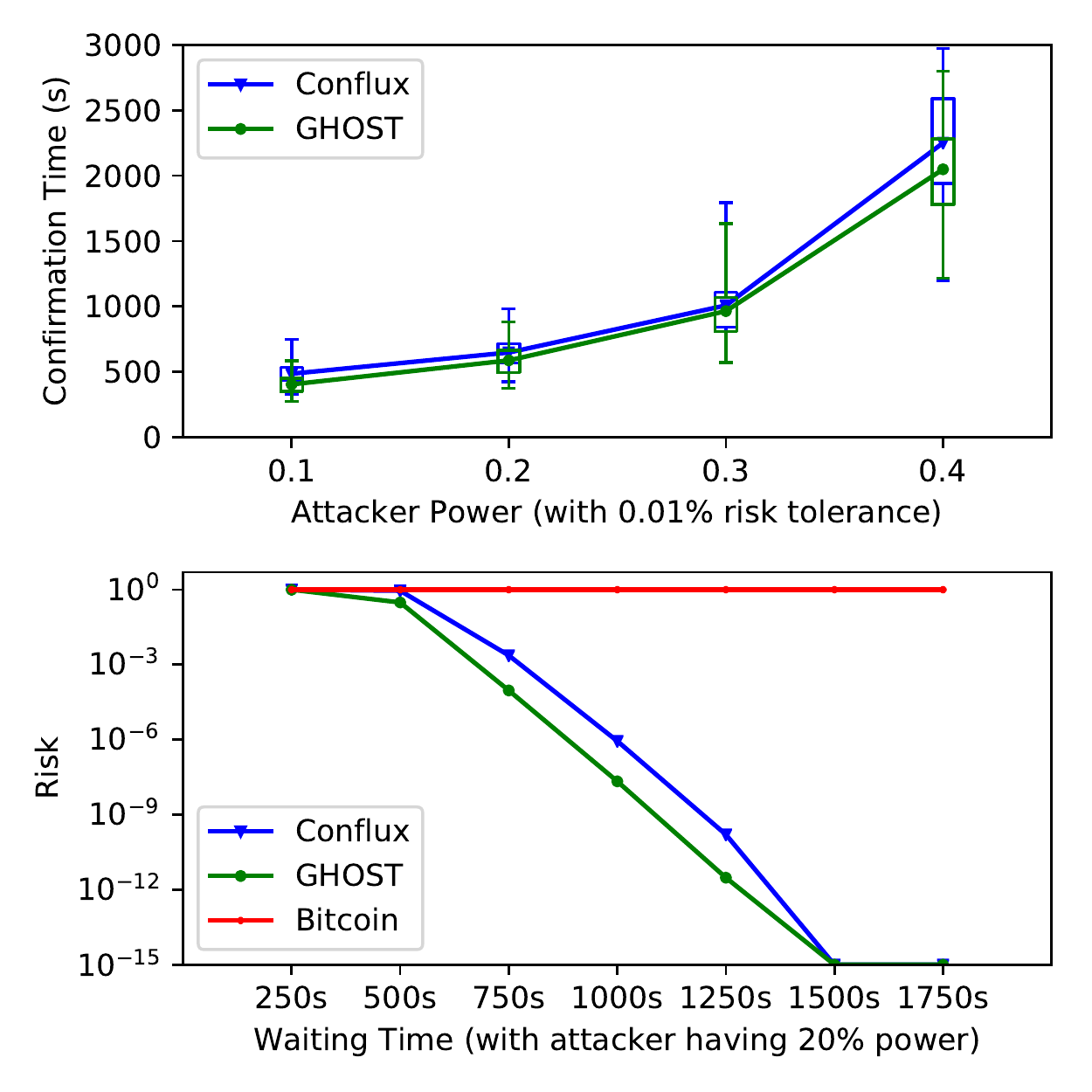}
	\caption{Risk tolerance.}
	\label{fig:attack}%
\end{figure}

\noindent {\textbf{Attacker Capability and Confidence Ratio: }}
The top plot in Figure~\ref{fig:attack} shows, under the block generation
setting of 4M/10s, how the confirmation time changes for {\name} and GHOST, if
the user assumes different attacker capability. Note that Bitcoin cannot
confirm confirmations in this setting. Our results show that even the user
assumes attackers controlling 30\% of the block generation power, {\name}
can still confirm blocks in a medium of 16.8 minutes with confidence 99.99\%.
As the attacker controls more power, the confirmation time grows exponentially
for both protocols. The bottom plot shows, under
the setting of 4M/10s, as the user waits longer how the confirmation risk
changes for {\name}, Bitcoin, and GHOST. Our results show that the chance of
attackers reverting the total order prefix of a confirmed block drops
exponentially as the user waits longer. 

\subsection{Scalability}
\label{sec:results:scalability}

To study the scalability of {\name}, we run {\name} with the following two
configurations: 1) increasing the number of deployed full
nodes from 2.5k to 20k with block size limit at 4MB and block generation
rate at 10s per block (4MB/10s); 
2) lifting the bandwidth limit to 40Mbps per node with block size limit at 4MB 
and block generation rate at 2.5s per block (4MB/2.5s).

\noindent \textbf{Scalability with Higher Bandwidth Limit:} In our experiments,
we found that the throughput bottleneck of {\name} becomes the processing
capability of each individual node, especially the bandwidth limit. {\name}
cannot run under the block generation setting of 4M/2.5s simply because the
gossip network does not have enough bandwidth to propagate blocks under this
fast rate. We then lifted the bandwidth limit of each node from 20Mbps to
40Mbps and run {\name} on 10k full nodes with the setting of 4M/2.5s again.
{\name} runs successfully and achieves the throughput of 5.76G/h. In
this run, {\name} also confirms blocks in 5.68 minutes on average. 
If we assume the same transaction
size as the real-world Bitcoin network, {\name} would process 6400 transactions
per second.

\begin{figure}
\includegraphics[width=\linewidth]{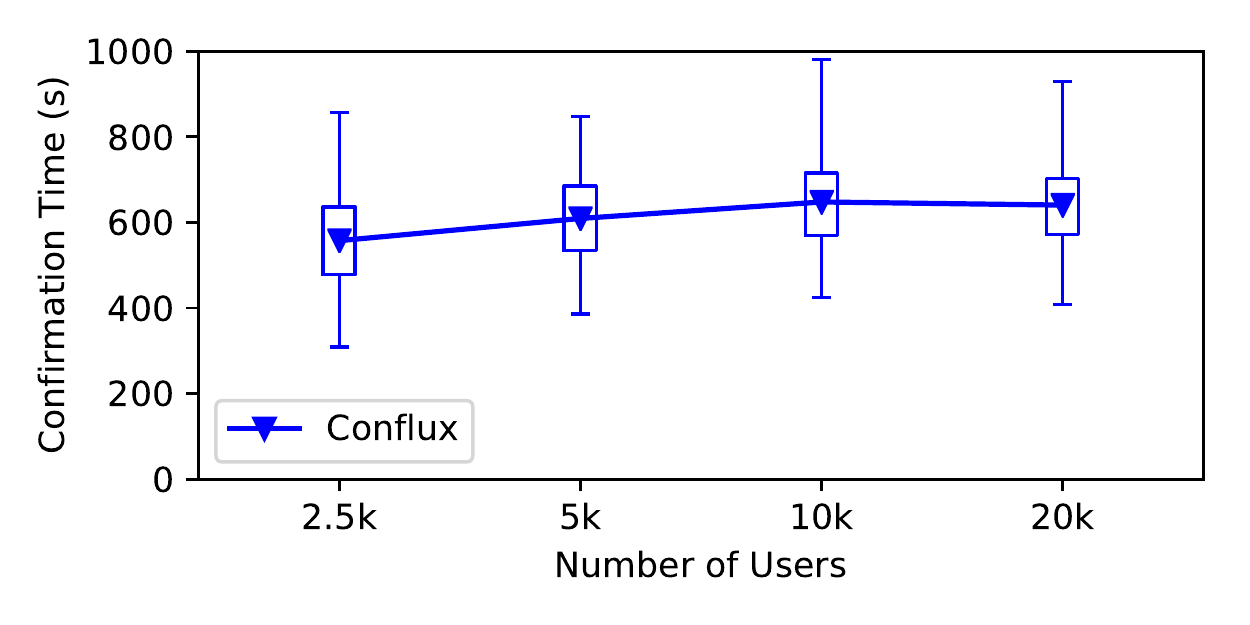}
\caption{Confirmation time of Conflux with 4MB block size, 10s generation interval, and 2.5k to 20k users}
\label{fig:scale}%
\end{figure}

\begin{figure}
\includegraphics[width=\linewidth]{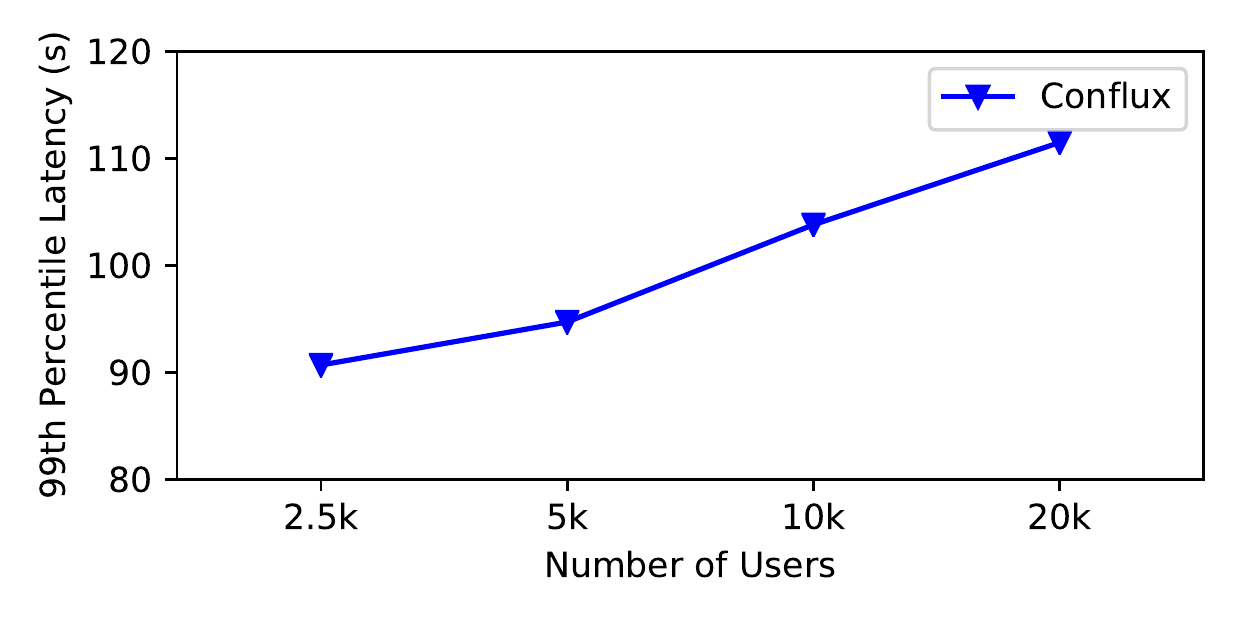}
\caption{Network diameter $d$ of Conflux with 4MB block size, 10s generation interval, and 2.5k to 20k users}
\label{fig:scale-d}%
\end{figure}

\noindent \textbf{Scalability with More Nodes:}
Figure~\ref{fig:scale} shows the confirmation time of {\name} with 2.5k, 5k,
10k, and 20k full nodes under the setting of 4MB/10s. X-axis corresponds to the
number of full nodes while Y-axis tells the confirmation time in seconds.
Figure~\ref{fig:scale-d} shows the network diameter of {\name} with 2.5k, 5k,
10k, and 20k full nodes under the setting of 4M/10s. X-axis corresponds to the
number of full nodes while Y-axis tells the network diameter $d$ (for
propagating 99\% of blocks). Our results show that {\name} scales well to 20k
full nodes and achieves average confirmation time under 10.7 minutes. Note that
the achieved throughput of the 4MB/10s setting is always 720MB/h. Our results
also show that the network diameter grows linearly as the number of full nodes
doubling. The results show that even with 20k users, the increment of the
network diameter is small. Therefore the confirmation time of of a {\name}
transaction is still dominated by waiting enough blocks building on top the
corresponding pivot chain block that processes the transaction. Because we are
using the same 4MB/10s setting, this waiting time stays mostly the same so does
the confirmation time.

\section{Related Work}
\label{sec:related_work}

\noindent \textbf{DAG-based consensus:} People have proposed several consensus
protocols over DAG-based blockchains.  SPECTRE~\cite{SPECTRE} specifies a
non-transitive partial orders for all pairs of blocks in the DAG, while {\name}
provides a total order over all transactions which is critical to support
applications like smart contracts.  Inclusive blockchains~\cite{inclusive}
extends the Nakamoto consensus to DAG and specifies a framework to include
off-chain transactions in a consistent manner. {\name} differs from it in that
{\name} maintains two different kinds of edges between blocks, i.e., parent
edges and reference edges, while inclusive blockchains protocol has only one
kind of edges.  The saperation between parent edges and reference edges
enables us to obtain a pivot chain in the parental tree formed by parent edges only.
We can therefore prove the safety of {\name} directly based on the safety
property of the chain-based GHOST protocol~\cite{GHOST}. In contrast, the main
chain in the inclusive blockchains protocol is a non-extensible path defined in
its DAG, and it is therefore not possible to apply GHOST directly to the
inclusive blockchains protocol.

PHANTOM~\cite{PHANTOM} shares the same aspects of {\name} in terms of
specifying a total order across transactions in DAG-based blockchains. In
PHANTOM, participating nodes first find an approximate $k$-cluster solution for
its local block DAG to prune potentially malicious blocks, then topologically
sorts the remaining blocks to obtain a total order. PHANTOM, however, is
vulnerable to liveness attacks. Attackers with little computation power can
delay the confirmation of transactions indefinitely with high probabilities
even all honest nodes are completely synchronous. See
Appendix~\ref{app:phantom} for the attack.

Besides the aforementioned differences, to our best knowledge {\name} presents
the first empirical evaluation of DAG-based blockchains. Running 10k full nodes
on EC2 where each full node has 40Mbps of bandwidth, {\name} commits 5.76GB of
transactions per hour and confirms them within 4.5-7.4 minutes. There is no
empirical evaluation of other DAG-based protocols and it is therefore unclear
what is the throughput and the confirmation time of these protocols once
implemented and deployed.

\noindent \textbf{Nakamoto consensus:} The Nakamoto consensus protocol
and the GHOST rule specify how the nodes should choose a single
canonical chain when encountering multiple forks~\cite{bitcoin,
  GHOST}. The end result is that all honest nodes converge on the
canonical chain on a high probability. The canonical chain corresponds
to a total-ordered, irreversible log of transactions, where blocks
and transactions that are not on the canonical chains are discarded
and do not contribute to the throughput. 

Instead of choosing one single canonical chain, {\name} assigns a total order
of non-conflicting transactions over the DAG. In {\name}, blocks that are not
on the pivot chains also contribute to the throughput while {\name} still
maintains a total-ordered, irreversible log of transactions for the users,
resulting in significant performance boost compared to Bitcoin and GHOST.

\noindent \textbf{Consortium consensus:} Much research explorers the direction of 
reducing the uses of the expensive Nakamoto consensus in blockchains
to improve their performances. Bitcoin-NG~\cite{BitcoinNG} elects a
leader using the Nakamoto consensus protocol and the leader is
responsible to commit all transactions until the next leader is
elected. Several research work has proposed to combine Nakamoto
consensus with BFT protocols~\cite{ByzCoin, Hybrid}, or to fully
replace Nakamoto consensus with BFT protocols~\cite{Algorand,
  HoneyBadgerBFT, stellar}.

From a practical point of view, all the proposals above run the
alternative consensus protocols within a confined group (one node for
Bitcoin-NG) of nodes, since protocols like BFT only scale up to dozens
of nodes in practices. Therefore one key challenge of these systems
need to address is to choose the confined group in an adversarial
environment like blockchains while maintaining the security
guarantees. For example, the groups can be chosen based on their stakes
of the system~\cite{Algorand} or external hierarchy of
trusts~\cite{stellar}. 

{\name} differs from the above approaches in two ways. First, the total
orders of the transactions is decided by all participants of the
network instead of a confined group. Additionally {\name} is able to
tolerate to half of the network are malicious while the BFT-based
approaches can only tolerate up to one third of malicious
nodes. Second, the above approaches enforce the total order
\emph{eagerly} as the members of the confined group fully 
verify and commit the transactions before moving on to the next
ones. {\name}, however, allows multiple blocks generating in parallel
and finalizes their orders later. The design decision presents an
interesting trade-off between throughput and latency in the
system. For example, {\name} achieves 3.84x throughput compared to
Algorand, but the confirmation time in Algorand is shorter than {\name}.

\noindent \textbf{Fairness:} Recent studies have shown that large miners with
more than 25\% of computational power can capture unproportionally more
rewards, putting smaller miners in disadvantages~\cite{selfishmining,
Fruitchain, Nayak2016}. Although achieving fairness is outside the scope of
this paper, we note that by adapting faster block generation rates and allowing
multiple blocks generated in parallel, {\name} inherently mitigates the
disadvantages of small miners.
It is not possible for a large miner to invalidate blocks generated by small
miners via forking the chain.


\noindent \textbf{Proof-of-Stake:} The original Nakamoto consensus protocol in
Bitcoin requires nodes to solve significant computation puzzles (i.e.,
proof-of-work (PoW)) to vote for consensus. As the PoW scheme demands
a significant amount of resources, alternative schemes such as proof-of-stake
(PoS) has been proposed~\cite{PeerCoin, Ouroboros, Casper}. In PoS based system
the leader is elected based upon the stakes he or she owns in the system. The
leader then is responsible to append new blocks to the blockchain. {\name} is
complementary to the PoS scheme. The consensus algorithm can be adopted by the
PoS-based blockchains as long as the PoS mechanisms can maintain a stable
network block generation rate.

\section{Conclusion}
\label{sec:conclusion}

{\name} is a fast, scalable, and decentralized blockchain platform with proved
safety. It exploits the inherent parallelism among blockchain transactions,
uses a DAG-based approach to defer the total order reconciliation while
providing the externally same interface compared to traditional chain-based
approaches. It provides orders of magnitude throughput improvement, as
validated by the real deployment in Amazon EC2 clusters. {\name} provides a
promising solution to address the performance bottleneck of blockchains and
opens up a wide range of blockchain applications.

\section*{Acknowledgement}
We thank Zhenyu Guo and Haohui Mai for valuable feedbacks on early drafts of
this paper. We thank Guang Yang and Jialin Zhang for the help on the consensus
algorithm.

{\footnotesize \bibliographystyle{acm}
    \bibliography{paper}}
\label{EndPage} 
\appendix

\section{Attack on PHANTOM}
\label{app:phantom}

\subsection{Overview}

PHANTOM~\cite{PHANTOM} is a DAG-based protocol that attempts to achieve
consensus on a total order of blocks. In PHANTOM, participants topologically sorts their local DAG. The algorithm guarantees results consistency among different participants and robustness of accepted transactions. 

The topological sorting algorithm consists of two phases, in the first phase, the algorithm 2-colors all the blocks into blue and red to eliminate the potentially malicious blocks. Given a graph $G$, this phase contains 4 steps:

\begin{itemize}
	\item For each tip (the blocks without decedent) $b\in tips(G)$, let $past(b)$ contains all the ancestors of $b$, coloring the subgraph $past(b)$ recursively.
	\item Let $|BLUE_k(G)|$ denote the number of blue blocks in 2-coloring result of $G$. Find the tip $b_{max}$ which maximizes $|BLUE_k(past(b_{max}))|$.
	\item In graph $G$, color blocks in $past(b_{max})$ according to the result of subgraph $past(b_{max})$, color $b_{max}$ in blue. 
	\item Let $\mathrm{anti}(b)$ denotes the blocks which aren't the ancestors nor the decedents of $b$. For the blocks $b\in \mathrm{anti}(b_{max})$, color it in blue if $\mathrm{anti}(b)$ contains less than $k$ blue blocks in $G$.
\end{itemize}

The score of block $b$ is defined by $|BLUE_k(past(b))|$. A main chain is derived from this step. $b_{max}$ is the chain tip. The highest scoring tip in $past(b_{max})$ is its predecessor in the chain, and so on.

In the second step, participants topological sorts all the blue blocks based on the main chain. In correctness proof of PHANTOM, robustness of topological order is based on robustness of its main chain.


\subsection{Liveness attack}

Here we show an attack for Algorithm 1 in PHANTOM\cite{PHANTOM}. This attack allows attacker to kick out a block from main chain arbitrarily late. We only allow attacker withhold finite blocks and have small block generation capability. For arbitrary large time duration $d$, the attacker is able to kick out one block from main chain which has been received by all the honest miners for time $d$. 

\noindent \textbf{Notation:} In the following, we call the blocks generated by honest nodes \emph{honest blocks} and the blocks generated by attacker \emph{malicious blocks}. For any block $c$, $\mathrm{anti}(c)$ contains all the blocks which are not the ancestors nor the descendants of $c$, which is called \emph{anti-set}. $\mathrm{past}(c)$ contains all the ancestors of $c$.

The attacker chooses an honest block as the start point, which is denoted by $b_1$. The honest blocks which refer $b_1$ as ancestor are denoted by $b_2,b_3,\cdots $ in time sequential. Similarly, we use $a_1,a_2,\cdots$ to denote the malicious blocks. Let set $B$ contain all the honest block, set $A$ contains all the malicious block.

\noindent \textbf{Network Assumption:} Here we use a weaker network assumption.
We do not assume the attacker can control or delay the communication between
honest participants. We instead assume all messages between any two nodes are
delivered immediately, i.e., we assume a fully synchronous network. 

Or we can make this assumption weaker. We only require the messages between attacker and honest nodes are delivered immediately. For the network between honest nodes, we have the following assumption. We assume $|B\cap \mathrm{anti}(b_j)|$ has an upper bound $k'$ for all the $b_j\in B$ with negligible exception. In real world, most mining computation power are in mining pools with good network synchronization. But it also takes several tens of seconds to relay a block to all the participants. So $k'$ may be much smaller than the PHANTOM protocol parameter $k$.

    

We also assume $b_1$ is on the main chain of every honest blocks and $\mathrm{anti}(b_2)\cap B = \emptyset$. 

\noindent \textbf{Parameter Assumption:} Suppose the gap between upper bound $k'$ and PHANTOM protocol parameter $k$ is $k_\Delta=k-k'$. When all the blocks suffer a maximum network delay, the choice of $k$ guarantees that $|\mathrm{anti}(b_j)|\le k$ holds for almost all the honest block $b_j$.  So in a high block generation rate, we can assume that $k$ is large enough. Precisely, 
$$ k_\Delta(k_\Delta -7 )\ge 4 k'.$$

\noindent \textbf{Attack strategy:} We define an positive integer array $\{h_i\}_{i=1}^\infty$ as following:

$$ h_n = \frac{(n-2)(n-1)}{2} +1.$$

For each malicious block $a_i$, it refers all the blocks in $\{b_x |x\in[h_i]\}\cup \{a_y | y\in[i-1]\}$ as its ancestors. Attacker withholds block $a_i$ until block $b_{h_{i-1+k_\Delta}}$ is generated.
 When $b_{h_{i-1+k_\Delta}}$ is generated, attacker makes everyone receive $a_i$ and $b_{h_{i-1+k_\Delta}}$ immediately. 
 
 In this strategy, attacker can start to mine $a_i$ when $b_{h_i}$ and $a_{i-1}$has been generated. If $b_{h_{i-1+k_\Delta}}$ is generated earlier than $a_i$, we say the attacker \emph{fails the liveness attack.} Moreover, if $i\ge 3\Delta - 14$ and $b_{h_{i+1}}$ is generated earlier than block $b_{h_i}$, we will also judge that the attacker fails the liveness attack.

\noindent \textbf{Analysis:} According to the previous strategy, every malicious blocks have a large anti-set and every honest blocks has a small anti-set. 
\begin{lemma}
	For any $b_j\in B$, $|\mathrm{anti}(b_j)\cap A| < k_\Delta$.
	\label{lma:phaa_1}
\end{lemma}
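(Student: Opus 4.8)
\noindent The plan is to turn the event ``$a_i\in\mathrm{anti}(b_j)$'' into a pair of inequalities on the index $i$ and then count the admissible $i$ using the explicit form of $\{h_n\}$ --- in particular the fact that its first differences are $h_{n+1}-h_n=n-1$, so the ``windows'' $[h_i,h_{i+c}]$ grow quadratically in $i$.

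First I would characterize when a malicious block $a_i$ is a \emph{descendant} of $b_j$. By construction $a_i$ references exactly $b_1,\dots,b_{h_i}$ (and $a_1,\dots,a_{i-1}$), and since the honest past of each $a_{i-1}$ is contained in $\{b_1,\dots,b_{h_{i-1}}\}\subseteq\{b_1,\dots,b_{h_i}\}$ (an easy induction on $i$), we get $\mathrm{past}(a_i)\cap B=\{b_1,\dots,b_{h_i}\}$. Hence $b_j\in\mathrm{past}(a_i)\iff j\le h_i$, i.e. $a_i$ is \emph{not} a descendant of $b_j$ precisely when $j>h_i$.

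Second I would characterize when $a_i$ is an \emph{ancestor} of $b_j$. The attacker hands $a_i$ to every honest node exactly when $b_{h_{i-1+k_\Delta}}$ is generated, and an honest miner references all tips it knows; hence any honest block whose mining starts after that instant already has $a_i$ in its past, so the only honest blocks that can miss $a_i$ are $b_1,\dots,b_{h_{i-1+k_\Delta}}$ together with the $\le |\mathrm{anti}(b_{h_{i-1+k_\Delta}})\cap B|\le k'$ honest blocks still in flight at that instant. Thus $a_i\notin\mathrm{past}(b_j)$ forces $j\le h_{i-1+k_\Delta}$, up to a slack of at most $k'$ in $j$ coming from the in-flight blocks (and in the fully synchronous variant this slack essentially vanishes).

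Combining the two characterizations and first ignoring the slack, $a_i\in\mathrm{anti}(b_j)$ implies $h_i<j\le h_{i-1+k_\Delta}=h_{i+k_\Delta-1}$. For a fixed $b_j$, let $I=\max\{i:h_i<j\}$; by monotonicity $h_I<j\le h_{I+1}$, so $h_m\ge j\iff m\ge I+1$, and the upper inequality $j\le h_{i+k_\Delta-1}$ becomes $i+k_\Delta-1\ge I+1$, i.e. $i\ge I-k_\Delta+2$. The admissible $i$ therefore form the interval $\{I-k_\Delta+2,\dots,I\}$ of size exactly $k_\Delta-1$, so $|\mathrm{anti}(b_j)\cap A|\le k_\Delta-1<k_\Delta$. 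This counting step is routine; the main obstacle is checking that the $k'$-sized in-flight slack cannot push the count up to $k_\Delta$. Here I would invoke the parameter assumption $k_\Delta(k_\Delta-7)\ge 4k'$: the window $[h_i,h_{i+k_\Delta-1}]$ has width $\sum_{m=i}^{i+k_\Delta-2}(m-1)\ge\binom{k_\Delta-1}{2}=\tfrac12(k_\Delta-1)(k_\Delta-2)$, and a short computation shows this strictly exceeds $k'$ under the assumption (which also forces $k_\Delta\ge 7$, so no degenerate small-index behaviour of $h$ survives). One then argues that, even after the $k'$-correction, no more than $k_\Delta-1$ of these widely separated windows can simultaneously cover a single $b_j$. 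Making that last estimate airtight --- rather than the clean count --- is where I expect the real difficulty to be.
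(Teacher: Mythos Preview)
Your interval-counting argument is correct and is exactly the computation the paper has in mind; note, however, that the paper states this lemma \emph{without} proof, so there is nothing to compare against beyond the attack description itself.

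The part of your write-up that needs correction is the ``$k'$-slack'' discussion. You worry that some $b_j$ with $j>h_{i-1+k_\Delta}$ could still miss $a_i$ because of in-flight honest blocks, and you then reach for the parameter assumption $k_\Delta(k_\Delta-7)\ge 4k'$ to absorb that slack. This is unnecessary. In \emph{both} network variants the paper assumes that messages between the attacker and honest nodes are delivered immediately. The attacker releases $a_i$ at the exact generation time of $b_{h_{i-1+k_\Delta}}$, so every honest miner already holds $a_i$ strictly before producing any block with index $>h_{i-1+k_\Delta}$; since honest miners reference all tips, such a block necessarily has $a_i$ in its past. Hence $a_i\notin\mathrm{past}(b_j)\iff j\le h_{i-1+k_\Delta}$ holds exactly, with no $k'$ correction. (The companion Lemma~\ref{lma:phaa_2} confirms this: it asserts the \emph{equality} $|\mathrm{anti}(a_i)\cap B|=h_{i-1+k_\Delta}-h_i$, which would fail if any slack were present.) The $k'$ parameter and the inequality $k_\Delta(k_\Delta-7)\ge 4k'$ are used only to push the count in Lemma~\ref{lma:phaa_2} above $k+k'$; they play no role here.

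So your ``clean count'' is already the full proof: $a_i\in\mathrm{anti}(b_j)$ iff $h_i<j\le h_{i+k_\Delta-1}$, and for fixed $j$ the admissible $i$ lie in an interval of length $k_\Delta-1$. Drop the last paragraph about making the slack estimate airtight; there is no remaining difficulty.
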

\begin{lemma}
	For any $a_i\in A$, $$|\mathrm{anti}(a_i)\cap B| = \frac{(k_\Delta-1)(k_\Delta+2i-4)}{2}> k + k'.$$
	\label{lma:phaa_2}
\end{lemma}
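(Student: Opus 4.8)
The plan is to count directly the honest blocks that land in $\mathrm{anti}(a_i)$, using the explicit reference structure the attacker imposes. Recall that $a_i$ references exactly the honest blocks $\{b_x \mid x \in [h_i]\}$ (together with all earlier malicious blocks), so $\mathrm{past}(a_i) \cap B = \{b_1,\dots,b_{h_i}\}$. Hence an honest block $b_j$ with $j > h_i$ is never an ancestor of $a_i$; it is a descendant of $a_i$ only if it (transitively) references $a_i$, which by the withholding schedule cannot happen until $b_{h_{i-1+k_\Delta}}$ is generated and $a_i$ is released. So the honest blocks in $\mathrm{anti}(a_i)$ are precisely those $b_j$ with $h_i < j < h_{i-1+k_\Delta}$ — I would first justify this carefully, treating the two boundary cases (why $b_{h_i}$ itself is an ancestor and not in the anti-set, and why $b_{h_{i-1+k_\Delta}}$ and everything after it becomes a descendant once $a_i$ is published, using $\mathrm{anti}(b_2)\cap B=\emptyset$ to rule out honest forks that could escape the descendant relation).

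Given that characterization, $|\mathrm{anti}(a_i)\cap B| = h_{i-1+k_\Delta} - h_i - 1$. Now I substitute the closed form $h_n = \frac{(n-2)(n-1)}{2}+1$. Writing $n = i-1+k_\Delta$ and $m = i$, a short computation gives
\begin{displaymath}
h_{i-1+k_\Delta} - h_i - 1 = \frac{(i+k_\Delta-3)(i+k_\Delta-2) - (i-2)(i-1)}{2} - 1.
\end{displaymath}
Expanding the numerator, the $i^2$ terms cancel and one is left with a linear-in-$i$ expression whose simplification yields $\frac{(k_\Delta-1)(k_\Delta+2i-4)}{2}$; I would carry out this algebra and confirm it matches the claimed formula. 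This is the routine part.

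For the inequality $\frac{(k_\Delta-1)(k_\Delta+2i-4)}{2} > k + k'$, I would use the parameter assumption $k_\Delta(k_\Delta - 7) \ge 4k'$ together with $k = k' + k_\Delta$ and $i \ge 1$. The worst case is the smallest admissible $i$; plugging in the minimal value of $i$ for which $a_i$ is relevant and bounding $k+k' = k_\Delta + 2k'$ from above using $2k' \le \frac{k_\Delta(k_\Delta-7)}{2}$ should close the gap with room to spare. The main obstacle I anticipate is not the algebra but the first step: rigorously pinning down that no honest block before $b_{h_i}$ can fail to be an ancestor and no honest block from $b_{h_{i-1+k_\Delta}}$ onward can fail to be a descendant — this requires being careful about the DAG reachability induced by honest nodes' own reference edges and invoking the synchrony assumption and $\mathrm{anti}(b_2)\cap B = \emptyset$ at exactly the right place, since an off-by-one here propagates into both the equality and the inequality.
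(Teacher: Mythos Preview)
The paper states this lemma without proof, so your overall direct-count approach is the natural one and matches what the authors evidently have in mind. However, your characterization of $\mathrm{anti}(a_i)\cap B$ contains exactly the off-by-one you warn yourself about at the end.

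By the withholding rule, $a_i$ is released \emph{together with} $b_{h_{i-1+k_\Delta}}$: the honest node that mines $b_{h_{i-1+k_\Delta}}$ has not yet seen $a_i$ when it fixes its references, so $b_{h_{i-1+k_\Delta}}$ cannot be a descendant of $a_i$. It is also not an ancestor, since $a_i$ only points back to $\{b_1,\dots,b_{h_i}\}$. Hence $b_{h_{i-1+k_\Delta}}\in\mathrm{anti}(a_i)$, and the correct range is $h_i < j \le h_{i-1+k_\Delta}$, giving $|\mathrm{anti}(a_i)\cap B| = h_{i-1+k_\Delta}-h_i$, not $h_{i-1+k_\Delta}-h_i-1$. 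Your own algebra confirms this: the expression $\tfrac{(i+k_\Delta-3)(i+k_\Delta-2)-(i-2)(i-1)}{2}$ simplifies exactly to $\tfrac{(k_\Delta-1)(k_\Delta+2i-4)}{2}$ with no ``$-1$'' left over, so if you start from $h_{i-1+k_\Delta}-h_i-1$ you will land one short of the stated value. Once you shift the upper endpoint to include $b_{h_{i-1+k_\Delta}}$, both the equality and the inequality (via $k+k'=k_\Delta+2k'$ and $2k'\le \tfrac{k_\Delta(k_\Delta-7)}{2}$, worst case $i=1$) go through as you outline.
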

These properties provides malicious block advantage in calculate blue set.
\begin{lemma}
	According to Algorithm 1 in PHANTOM, for any $b_j\in B$, $Blue_k(past(b_j)) = past(b_j) \cap B$. For any $a_i\in A$, $Blue_k(past(a_i))= past(a_i).$ 
	\label{lma:phanAtt}
\end{lemma}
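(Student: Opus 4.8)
The plan is to prove both assertions simultaneously by strong induction over all blocks taken in the chronological order $b_1, a_1, b_2, b_3, \dots$ in which they are generated. PHANTOM's $k$-coloring is recursive and its anticone-coloring step refers to the blue set computed so far, so a bare ``final blue set'' hypothesis is too weak; I would strengthen the induction hypothesis so that, for every block $c$ processed so far, it also pins down the \emph{main chain} that the recursion selects inside $\mathrm{past}(c)$: for an honest $b_j$ this main chain contains only honest blocks, and for a malicious $a_i$ it is $a_{i-1}, a_{i-2}, \dots, a_1$ followed by an honest chain down to $b_1$.

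\smallskip
\noindent\emph{Honest case.} First I would note that $b_j$ is the unique sink of $\mathrm{past}(b_j)$ (every other block is an ancestor of $b_j$), so the top-level recursion strips off $b_j$ and recurses on $\mathrm{past}(b_j)\setminus\{b_j\}$. Among the tips of that subgraph an honest tip wins the blue-score comparison: any malicious tip $a_i$ there must be one released only just before $b_j$, so $j-1 \ge h_{i-1+k_\Delta}$, while by the induction hypothesis $|\mathrm{past}(a_i)| = h_i + i - 1$; a short computation with $h_n = \tfrac{(n-2)(n-1)}{2}+1$ gives $h_{i-1+k_\Delta} > h_i + i - 1$ with room to spare, which is the step that consumes the parameter assumption $k_\Delta(k_\Delta-7)\ge 4k'$, since the inequality must survive the $\le k'$ slack coming from honest concurrency. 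So the main chain stays honest and all honest blocks on it are colored blue. For the anticone step I would argue: (a) any malicious block $a_i$ lying in $\mathrm{past}(b_j)$ was released before $b_j$ was generated, so by Lemma~\ref{lma:phaa_2} together with $|\mathrm{anti}(b_j)\cap B|\le k'$, more than $k$ of the honest blocks in $\mathrm{anti}(a_i)$ already sit in $\mathrm{past}(b_j)$ and are blue by induction, whence $a_i$ sees $\ge k$ blue blocks in its anticone and stays red; (b) any off-chain honest block $b_x$ has, inside $\mathrm{past}(b_j)$, at most $k'$ honest blocks in its anticone plus only red malicious blocks, i.e.\ fewer than $k$ blue blocks, so it is colored blue. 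Together these give $\mathrm{Blue}_k(\mathrm{past}(b_j)) = \mathrm{past}(b_j)\cap B$.

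\smallskip
\noindent\emph{Malicious case.} Here $\mathrm{past}(a_i) = \{b_1,\dots,b_{h_i}\}\cup\{a_1,\dots,a_i\}$. Using $h_i - h_{i-1} = i-2$ and the induction hypothesis that $\mathrm{past}(a_{i-1})$ is entirely blue, the tip $a_{i-1}$ has blue score $h_{i-1}+(i-1) > h_i$, which beats the honest tip $b_{h_i}$ (blue score $\le h_i$) and every earlier tip; so the recursion's main chain is the all-malicious-then-honest chain above and every block on it is blue. For an off-chain honest block $b_x$, its anticone inside $\mathrm{past}(a_i)$ contains $\le k'$ honest blocks and, by Lemma~\ref{lma:phaa_1}, at most $k_\Delta-1$ malicious blocks, hence at most $k' + k_\Delta - 1 < k$ blue blocks, so $b_x$ is colored blue. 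Therefore $\mathrm{Blue}_k(\mathrm{past}(a_i)) = \mathrm{past}(a_i)$. The base cases --- $b_1$, where $\mathrm{past}(b_1)$ is a chain of honest blocks, and $a_1$, where $\mathrm{past}(a_1) = \{b_1,a_1\}$ --- are checked directly.

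\smallskip
\noindent The main obstacle is precisely the recursion combined with the fact that the anticone-coloring step reads off the partial blue set: the induction hypothesis must be strong enough to fix the selected main chain at \emph{every} recursion level, and then one has to verify the blue-score inequalities that force that choice. Those inequalities are where the quadratic growth of $\{h_i\}$ and the parameter assumption $k_\Delta(k_\Delta-7)\ge 4k'$ are used, and carefully carrying the $\pm k'$ concurrency slack through the restricted subgraphs $\mathrm{past}(b_j)$ and $\mathrm{past}(a_i)$ is the most error-prone bookkeeping.
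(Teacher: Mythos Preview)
Your proposal is correct and follows essentially the same inductive strategy as the paper: both argue by strong induction, identify the highest-scoring tip (honest for $b_j$, the predecessor $a_{i-1}$ for $a_i$) via the induction hypothesis together with the quadratic growth of $h_n$ and the parameter assumption, and then resolve the anticone-coloring step using the bounds of Lemmas~\ref{lma:phaa_1} and~\ref{lma:phaa_2}. Your explicit verification that off-chain honest blocks also turn blue in the $b_j$ case is in fact more complete than the paper's own write-up, which only spells out why the malicious blocks stay red; aside from a harmless off-by-one in the malicious-case score ($|Blue_k(\mathrm{past}(a_{i-1}))|=h_{i-1}+i-2$, not $h_{i-1}+(i-1)$), the arithmetic and the use of the $k'$ slack match the paper.
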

\begin{proof}
	Without loss of generality, we ignore the common ancestor $past(b_1)$ and regard $b_1$ as genesis block. We prove this lemma by induction.
	
	This lemma holds for $b_1$ trivially since $\mathrm{anti}(b_1)=\emptyset.$
	
	If this lemma holds for all the blocks in $\mathrm{past}(a_i)$, $|Blue_k(past(a_{i-1}))|=h_{i-1}+{i-2}.$ For any honest block $b_j\in past(a_i)$, 
	$|Blue_k(past(b_j))| = |past(b_j)\cap B|=j-1\le h_i-1.$ 
	So we have $|Blue_k(past(a_{i-1}))|>|Blue_k(past(b_j))|$. $a_{i-1}$ is the highest scoring tip of $past(a_i)$.
	
	We also have
	\begin{align*}
		&|\mathrm{anti}(b_j)\cap Blue_k(past(a_{i-1})\cup \{a_{i-1}\}) |\\
		\le &|\mathrm{anti}(b_j)\cap B | +|\mathrm{anti}(b_j)\cap A | \\
		< &k' + k_\Delta.
	\end{align*} 
	
	So all the block in $B\cap past(a_i)$ will be added to $Blue_k(past(a_i))$, which results in $Blue_k(past(a_i))\cap B = past(a_i)\cap B.$ We conclude $Blue_k(past(a_i))=past(a_i)$. 
	$\\$
	
	If this lemma holds for all the blocks in $\mathrm{past}(b_j)$, $\forall j'<j, |Blue_k(past(b_{j'}))|\ge j'-1-k'$. For any attacker block $a_i\in past(b_j)$, we have $j>h_{i-1+k_\Delta}$ and $b_{h_{i-1+k_\Delta}}\in past(b_j)$. 
	
	We have
	\begin{align*}
		&|Blue_k(past(a_i))| \\
		= &i-1+h_i\\
		= &i-1+h_{i-1+k_\Delta} - \frac{(k_\Delta-1)(k_\Delta+2i-4)}{2}\\
		<& h_{i-1+k_\Delta}-1-k' \\
		\le & |Blue_k(past(b_{h_{i-1+k_\Delta}}))| \\ 
		\le& \max_{b_{j'}\in past(b_{j})}|Blue_k(past(b_{j'}))|,
	\end{align*} 
	This inequality shows that the highest scoring tip of $b_j$ is in $B$. We denote it $b_{\bar{j}}$.
	
	For any $a_i\in \mathrm{anti}(G)$,
	\begin{align*}
	&|\mathrm{anti}(a_i)\cap Blue_k(past(b_{\bar{j}})\cup \{b_{\bar{j}}\}) |\\
	\ge &|\mathrm{anti}(a_i)\cap B | -k' \\
	> &k
	\end{align*} 
	No block in $A\cap past(b_j)$ will be add to $Blue_k(past(b_j))$, which means $Blue_k(past(b_j))\cap A = \emptyset$
\end{proof}

From this lemma, we can show that the attacker can kick out block $b_2$ from main chain at any time since the liveness attack has not failed. This means that the honest node shouldn't acknowledge block $b_2$ as an irreversible block because the attacker is able to change the main chain and reorder the blocks. More precisely, we have the following theorem.

\begin{theorem} 
	Starting from the generation time of block $a_{3\Delta-14}$, as long as the attacker has not failed the liveness attack, the attacker is able to kick out block $b_2$ from main chain in all the honest nodes.
\end{theorem}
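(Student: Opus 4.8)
The plan is to fix any honest node, any time $t$ at or after the generation of $a_{3k_\Delta-14}$ at which the attacker has not failed the liveness attack, and the node's local DAG $G$ at time $t$. Since the network is (essentially) synchronous, all honest nodes hold the same blocks relevant to the main-chain computation, and that computation is a deterministic function of $G$; so it suffices to show that $b_2$ is off the PHANTOM main chain of $G$, and the ``not failed'' hypothesis is exactly what guarantees such a $G$ exists at every such $t$ — it says the revealed malicious prefix $a_1,\dots,a_i$ (with $i\ge 3k_\Delta-14$) is present and the honest chain has reached the scheduled trigger block $b_{h_{i-1+k_\Delta}}$.

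The first step is to pin down the PHANTOM coloring of $G$ using the lemmas already proved. Because the schedule keeps every revealed malicious block ``old'', Lemma~\ref{lma:phaa_2} gives $|\mathrm{anti}(a_l)\cap B|=\tfrac{(k_\Delta-1)(k_\Delta+2l-4)}{2}>k+k'$ for every revealed $a_l$, and Lemma~\ref{lma:phaa_1} gives $|\mathrm{anti}(b_j)\cap A|<k_\Delta$ for every honest $b_j$; feeding these into Lemma~\ref{lma:phanAtt} yields $Blue_k(past(b_j))=past(b_j)\cap B$ for honest blocks and $Blue_k(past(a_l))=past(a_l)$ for malicious blocks. Hence all scores are determined: in the reduced DAG where $b_1$ is treated as genesis, an honest $b_j$ scores $|past(b_j)\cap B|\in[\,j-1-k',\,j-1\,]$ and a malicious $a_l$ scores $|past(a_l)|=h_l+l-1$.

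The second step is to run the main-chain recursion with these scores and show it never reaches $b_2$. The key sub-claim is that the chain is forced onto the malicious sub-DAG, after which it descends monotonically $a_i\to a_{i-1}\to\dots\to a_1\to b_1$: at each $a_l$ the only serious rival of $a_{l-1}$ (score $h_{l-1}+l-2$) among the tips of $past(a_l)$ is the honest frontier block $b_{h_l}$ (score at most $h_l-1=h_{l-1}+l-3$), so $a_{l-1}$ always wins, and $a_1$'s single non-genesis reference is $b_1$. Thus the whole main chain is $g,b_1,a_1,\dots,a_i$, which contains neither $b_2$ nor any later honest block; since $b_2$ has been in every honest DAG since long before $t$, this is precisely the statement of the theorem. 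The parameter inequality $k_\Delta(k_\Delta-7)\ge 4k'$ together with the threshold $i\ge 3k_\Delta-14$ is used to make every score comparison above strict.

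The main obstacle is establishing that sub-claim, i.e. that the main-chain recursion genuinely prefers the malicious sub-DAG near the root instead of walking down the honest chain through $b_2$. This is where the precise release index $h_{i-1+k_\Delta}$ has to do double duty: small enough that the relevant malicious scores win their comparisons at the relevant juncture, yet large enough that $|\mathrm{anti}(a_l)\cap B|$ stays above $k+k'$ so that the Step-1 coloring stays valid. I expect balancing these two opposing requirements — and then propagating the argument uniformly over all honest nodes and all $t$ past the generation of $a_{3k_\Delta-14}$ — to be the delicate part, with the deterministic chain unwinding of Step 2 and the coloring reduction of Step 1 being routine given Lemmas~\ref{lma:phaa_1}--\ref{lma:phanAtt}.
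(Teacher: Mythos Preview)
Your proposal has a genuine gap: the mechanism is backwards. You analyze the honest nodes' \emph{current} local DAG $G$---the one containing the scheduled reveals $a_1,\dots,a_i$ together with the trigger block $b_{h_{i-1+k_\Delta}}$---and try to show that its PHANTOM main chain already bypasses $b_2$. But the reveal index $h_{i-1+k_\Delta}$ is deliberately chosen so large that the co-revealed honest block $b_{h_{i-1+k_\Delta}}$ \emph{outscores} $a_i$: this is exactly the inequality $i-1+h_i<h_{i-1+k_\Delta}-1-k'$ inside the proof of Lemma~\ref{lma:phanAtt}. Hence in the graph you set up, the highest-scoring tip is honest, the recursion stays on honest blocks throughout, and $b_2$ \emph{is} on the main chain. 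Your anticipated balancing act (``small enough that the relevant malicious scores win their comparisons'') cannot be won, because the schedule was calibrated for the opposite purpose---to keep the honest view stable while the attacker accumulates withheld blocks.

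The theorem is a capability statement, and the paper's proof accordingly has two halves. First it confirms (as above) that under the normal schedule $b_2$ remains on every honest main chain, so there is something to kick out. Second, it lets the attacker \emph{deviate} from the schedule by immediately broadcasting the most recently generated but still withheld block $a_w$ and all its ancestors. The ``not failed'' hypothesis then bounds the current top honest index $v$ in terms of $w$, and a direct score comparison shows $|Blue_k(past(a_w))|=h_w+w-1$ exceeds every honest score $\le v-1$, so $a_w$ becomes the global highest-scoring tip of the \emph{new} DAG. From there your descent argument $a_w\to a_{w-1}\to\dots\to a_1\to b_1$ is exactly right---but it must be run on this altered graph with $w$ the index of a withheld block, not on the scheduled-reveal graph with $i$ the last broadcast index.
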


\begin{proof}
	In the attack strategy, each time the attacker broadcasts $a_i$, it broadcasts $b_{h_{i-1+k_\Delta}}$ at the same time. In the proof of lemma \ref{lma:phanAtt}, we show that $b_{h_{i-1+k_\Delta}}$ has higher score than $a_i$. So for each local graph in an honest nodes, the main chain tip \footnote{The main chain tip is the block on main chain without children.} will be an honest block. We also show that each honest block will also choose an honest block as the highest scoring tip. So the main chain of each honest node must pass block $b_2$.
	
	Since the attacker has not failed the liveness attack, suppose $a_w$ and $b_v$ are the malicious block and honest block with the highest index. So we have $w\ge 3\Delta-14$ and $v<h_{w+2}$. All the honest blocks have score no more than $v-1$ because $Blue_k(past(b_j)) = past(b_j) \cap B\le j$ for all the honest blocks. The score of $a_w$ is $w+h_w-1$. Now we claim that 
	$$ v-1 < h_{w+2} -1 = 2w -2 + h_w  \le w + h_w -1.$$
	
	It implies that if attacker broadcasts $a_w$ and all its ancestors, $a_w$ will be the highest score block in all the honest nodes. The lemma \ref{lma:phanAtt} also shows that the highest scoring tip of a malicious block is also a malicious block. So the main chain of all the honest nodes will pass block $a_1$, which is in $\mathrm{anti}(b_2)$. Attacker kicks out $b_2$ from main chain successfully.
\end{proof}

Now the only remaining problem is that how long can an attacker maintain such an attack. The following theorem gives the probability of an infinite liveness attack.

\begin{theorem}
	Suppose $k_\Delta$ is an even integer. The liveness attack never fails with probability at least $$ \left(1-e^{-cq}\right)^{3k_\Delta-15}\cdot\prod_{i=3k_\Delta-14}^\infty (1-e^{-q(i-1)}),$$
	where $q=\frac{\la}{\lh}, c=1.5k_\Delta-8$. This equation is strictly larger than zero.
\end{theorem}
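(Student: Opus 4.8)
The plan is to write the target probability as a telescoping product of per‑step survival probabilities and to control each factor by a competing‑Poisson‑processes estimate. First I would note that, by the way the strategy is specified, the attack can only break at some \emph{step} $i$: either the main failure event (that $b_{h_{i-1+k_\Delta}}$ is generated before the attacker finishes mining $a_i$) occurs, or, for $i\ge 3k_\Delta-14$, the auxiliary timing clause fires. Hence $\{\text{the attack never fails}\}=\bigcap_{i\ge 1}S_i$, where $S_i$ is the event ``step $i$ survives.'' Because all the $S_i$ are functions of the same honest Poisson clock they are not independent, so rather than multiplying marginals I would use $\Pr\bigl[\bigcap_{i\ge1} S_i\bigr]=\prod_{i\ge 1}\Pr\bigl[S_i\mid S_1\cap\cdots\cap S_{i-1}\bigr]$ and bound each conditional factor from below.

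Next I would bound the conditional failure probability at step $i$. Conditioning on the survival of steps $1,\dots,i-1$ pins down the attacker's backlog; this is precisely the purpose of the auxiliary clause (active once $i\ge 3k_\Delta-14$), which guarantees that by the time $b_{h_i}$ appears the attacker has already finished $a_{i-1}$ and can start mining $a_i$ immediately. By the memorylessness of both the honest rate‑$\lh$ process and the attacker rate‑$\la$ process from that instant on, ``step $i$ fails'' becomes the race in which the honest network produces its next $w_i$ blocks before the attacker produces a single block, where
$$w_i \;=\; \bigl|\mathrm{anti}(a_i)\cap B\bigr| \;=\; h_{i-1+k_\Delta}-h_i \;=\; \tfrac{(k_\Delta-1)(k_\Delta+2i-4)}{2},$$
read off from the closed form of $\{h_n\}$ as in Lemma~\ref{lma:phaa_2}. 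The probability of that race is $\bigl(\lh/(\lh+\la)\bigr)^{w_i}=(1+q)^{-w_i}$. Using $\ln(1+q)\ge q/2$ on $q\in[0,1]$ together with the (at least) linear growth of $w_i$ in $i$, one gets $(1+q)^{-w_i}\le e^{-q(i-1)}$ for every $i\ge 3k_\Delta-14$, the range in which the auxiliary clause is in force and keeps the attacker caught up so that the full window $w_i$ is available. For the finitely many earlier indices, where the attacker may lag and only a shorter guaranteed window is available, the same computation yields only the cruder uniform bound $e^{-cq}$ with $c=1.5k_\Delta-8$; it is here that the parameter assumption $k_\Delta(k_\Delta-7)\ge 4k'$ (which forces $k_\Delta$ large, and with $k_\Delta$ even makes $c$ integral) is used to make the two exponent estimates match at the crossover $i=3k_\Delta-14$.

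Assembling the telescoped product then gives
$$\Pr[\text{never fails}]\;\ge\;(1-e^{-cq})^{\,3k_\Delta-15}\cdot\!\!\prod_{i\ge 3k_\Delta-14}\!\!\bigl(1-e^{-q(i-1)}\bigr),$$
which is exactly the claimed bound. Strict positivity is then routine: the finite prefix is a positive power of a positive quantity, and the infinite product converges to a strictly positive limit because $\sum_{i\ge 3k_\Delta-14}e^{-q(i-1)}$ is a convergent geometric series with ratio $e^{-q}<1$, and $\prod_i(1-a_i)>0$ whenever every $a_i\in[0,1)$ and $\sum_i a_i<\infty$.

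The step I expect to be the real obstacle is the dependence bookkeeping, not any individual estimate: I must argue carefully that conditioning on the survival of all earlier steps genuinely leaves the attacker ``caught up'' at step $i$ (this is exactly what the auxiliary failure clause for $i\ge 3k_\Delta-14$ buys, and it relies on $k_\Delta$ being large, hence the parameter assumption), and I must apply the memorylessness at the correct stopping time so that the residual event is a clean $\Gamma$‑versus‑exponential comparison with window $w_i$. Reconciling the crossover index $3k_\Delta-14$ and the constant $c=1.5k_\Delta-8$ with the slack in $\ln(1+q)\ge q/2$ is the fiddly‑but‑routine remainder.
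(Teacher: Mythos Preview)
Your approach diverges from the paper's, and it contains a genuine inconsistency that breaks the late-regime bound.

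The paper does not telescope conditional probabilities at all. It instead defines a strictly harder \emph{task} whose completion implies the attack never fails: for $i\le 3k_\Delta-15$ the attacker must mine $a_i$ while the honest chain advances over the interval $[b_{(i-1)c+1},\,b_{ic+1}]$ (length $c=1.5k_\Delta-8$), and for $i\ge 3k_\Delta-14$ over $[b_{h_i},\,b_{h_{i+1}}]$ (length $h_{i+1}-h_i=i-1$). The constant $c$ is chosen precisely so that $(3k_\Delta-15)c+1=h_{3k_\Delta-14}$, and one checks $h_i\le(i-1)c+1$ and $h_{i-1+k_\Delta}\ge ic+1$ in the early range, so the task really is sufficient. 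Because these honest-block intervals are \emph{disjoint}, the per-step successes are independent and the product is just a product of marginals; no stopping-time or catch-up bookkeeping is needed, and the otherwise mysterious values $c=1.5k_\Delta-8$ and the crossover $3k_\Delta-14$ fall out of the tiling arithmetic.

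Your argument uses the auxiliary failure clause in two incompatible ways. You invoke it to conclude that, given $S_1,\dots,S_{i-1}$, the attacker has already finished $a_{i-1}$ when $b_{h_i}$ appears; but that conclusion holds only if ``$a_{i-1}$ mined before $b_{h_i}$'' is part of $S_{i-1}$. If so, then $S_i$ must equally include ``$a_i$ mined before $b_{h_{i+1}}$,'' and the window governing $\Pr[S_i\mid S_1,\dots,S_{i-1}]$ is $h_{i+1}-h_i=i-1$, not your $w_i=h_{i-1+k_\Delta}-h_i$. You cannot use the auxiliary clause to obtain catch-up and then ignore it when sizing the race. With the correct window $i-1$ the race-loss probability is $(1+q)^{-(i-1)}$, and your inequality $\ln(1+q)\ge q/2$ yields only $e^{-q(i-1)/2}$, not $e^{-q(i-1)}$, so the target factor $1-e^{-q(i-1)}$ does not follow from your route. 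The early regime has the same defect: without the paper's disjoint length-$c$ intervals there is nothing ensuring the attacker is caught up at step $i\le 3k_\Delta-15$, so your ``cruder uniform bound $e^{-cq}$'' is unjustified and the specific value $c=1.5k_\Delta-8$ has no provenance in your argument. The fix is exactly the paper's disjoint-interval task; your positivity argument at the end is fine.
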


\begin{proof}
Here we require the attacker to finish the following tasks.

For $a_i$ with $i\le 3k_\Delta-15$, attacker starts to mine $a_i$ when block $b_{(i-1)\cdot c+1}$ is generated. And $a_i$ must be generated before the generation of block $b_{i\cdot c+1}$. 

For $a_i$ with $i\ge 3k_\Delta-14$, attacker starts to mine $a_i$ when block $b_{h_i}$ is generated and $a_i$ must be mined before the generation of $b_{h_{i+1}}$. 

For $i\le 3k_\Delta-15$, we have
\begin{align*}
	h_i &\le (i-1)\cdot c +1 \\
	h_{i-1+k_\Delta} &\ge i\cdot c +1
\end{align*}

For $i\ge 3k_\Delta-14$, we have
\begin{align*}
h_{3k_\Delta-14} &= (3k_\Delta-15)\cdot c +1 \\
h_{i+1} &= h_{i+1}
\end{align*}
	
In this task, it can be shown that the mining time intervals of malicious blocks satisfy the requirements in attack strategy and do not overlap. Completing this task implies applying a successful attack. For $i\le 3k_\Delta-15$, the attacker fails when mining $a_i$ with probability $1-(1-q)^c\le 1-e^{-cq}$. For $i\ge 3k_\Delta-14$, the attacker fails when mining $a_i$ with probability $1-(1-q)^{i-1}\le 1-e^{-q(i-1)}$. The probability of completing this task is 
$$ \left(1-e^{-cq}\right)^{3k_\Delta-15}\cdot\prod_{i=3k_\Delta-14}^\infty (1-e^{-q(i-1)}).$$

Let $n=\left\lceil\frac{-\log\left(1-e^{-q}\right)}{q}\right\rceil+1$, $ e^{-qn}<1-e^{-q}$, we have 
\begin{align*}
	&\prod_{i=3k_\Delta-14}^\infty (1-e^{-q(i-1)}) \\
	\ge& \left(1-\sum_{i=n+1}^\infty e^{-q(i-1)}\right)\prod_{i=3k_\Delta-14}^n (1-e^{-q(i-1)}) \\
	= & \left(1-\frac{e^{-qn}}{1-e^{-q}}\right)\prod_{i=3k_\Delta-14}^n (1-e^{-q(i-1)})\\
	>&0
\end{align*}

So the probability is strictly larger than 0.
\end{proof}
As a result, attacker with $15\%$ computation power can maintain this attack infinitely with probability $98.9\%$ when $k_\Delta \ge 40$.

\end{document}